\newtheorem{theorem}{Theorem}[section]
\newtheorem{lemma}[theorem]{Lemma}
\newtheorem{definition}[theorem]{Definition}
\definecolor{color_speedup}{RGB}{20, 20, 20}
\newcommand{\tspeedup}[1]{{\scriptsize\color{color_speedup}(#1\ensuremath{\times})}}
\newcommand{\etal}{\textit{et al.}}
\newcommand{\lmart}{\textsc{$\lambda$MART}}
\newcommand{\dbert}{\textsc{duoBERT}}
\newcommand{\A}{\mathcal{A}}
\newcommand{\D}{\mathcal{D}}
\newcommand{\Ci}{\mathcal{C}}
\renewcommand{\l}{\left}
\renewcommand{\r}{\right}
\begin{document}

\title{An Optimal Algorithm for Finding Champions\\in Tournament Graphs}

\author{
  Lorenzo Beretta, Franco Maria Nardini, Roberto Trani, and Rossano Venturini
  \IEEEcompsocitemizethanks{
    \IEEEcompsocthanksitem{ Lorenzo Beretta is with the Basic Algorithms Research Copenhagen (BARC), University of Copenhagen. E-mail:\href{beretta@di.ku.dk}{beretta@di.ku.dk}}
 
    \IEEEcompsocthanksitem{ Franco Maria Nardini and Roberto Trani are with the National Research Council of Italy. E-mail: \{\href{francomaria.nardini@isti.cnr.it}{francomaria.nardini}, \href{roberto.trani@isti.cnr.it}{roberto.trani}\}@isti.cnr.it}

    \IEEEcompsocthanksitem{ Rossano Venturini is with the Department of Computer Science, University of Pisa. E-mail:\href{rossano.venturini@unipi.it}{rossano.venturini@unipi.it}}
    }

  \thanks{
  This paper extends a previous contribution by Beretta \emph{et al.}~\cite{BerettaNTV19}.
  This work is supported by the ``Algorithms, Data Structures and Combinatorics for Machine Learning'' (MIUR-PRIN 2017) and  PNRR ECS00000017 Tuscany Health Ecosystem Spoke 6 ``Precision medicine \& personalized healthcare'', funded by the European Commission under the NextGeneration EU programme.
  }
}

\IEEEtitleabstractindextext{


\begin{abstract}
A tournament graph is a complete directed graph, which can be used to model a round-robin tournament between $n$ players.
In this paper, we address the problem of finding a champion of the tournament, also known as Copeland winner, which is a player that wins the highest number of matches.
In detail, we aim to investigate algorithms that find the champion by playing a low number of matches.
Solving this problem allows us to speed up several Information Retrieval and Recommender System applications, including question answering, conversational search, etc.
Indeed, these applications often search for the champion inducing a round-robin tournament among the players by employing a machine learning model to estimate who wins each pairwise comparison.
Our contribution, thus, allows finding the champion by performing a low number of model inferences.
We prove that any deterministic or randomized algorithm finding a champion with constant success probability requires $\Omega(\ell n)$ comparisons, where $\ell$ is the number of matches lost by the champion.
We then present an asymptotically-optimal deterministic algorithm matching this lower bound without knowing $\ell$, and we extend our analysis to three variants of the problem.
Lastly, we conduct a comprehensive experimental assessment of the proposed algorithms on a question answering task on public data.
Results show that our proposed algorithms speed up the retrieval of the champion up to $13\times$ with respect to the state-of-the-art algorithm that perform the full tournament.

\end{abstract}

\begin{IEEEkeywords}
  Tournament Graph, Round-Robin Tournament, Copeland Winner, Minimum Selection, Pairwise Ranking.
\end{IEEEkeywords}
}

\maketitle

\begin{tikzpicture}[remember picture,overlay]
	\node[anchor=south,yshift=10pt] at (current page.south)
	{\fbox{\parbox{\dimexpr\textwidth-\fboxsep-\fboxrule\relax}{
				\footnotesize{
					\copyright 2023 IEEE. Personal use of this material is permitted.
					Permission from IEEE must be obtained for all other uses, in any
					current or future media, including reprinting/republishing this
					material for advertising or promotional purposes, creating new
					collective works, for resale or redistribution to servers or lists, or
					reuse of any copyrighted component of this work in other works.
				}
	}}};
\end{tikzpicture}

\IEEEpeerreviewmaketitle


\section{Introduction}
\label{bf:sec:intro}
A \emph{tournament graph} is a complete directed graph $T = \left(V, E \right)$, where $V$ and $E$ are the sets of nodes and arcs, respectively~\cite{GS:reid2013tournaments}.
The tournament graph can be used to model a round-robin tournament between $n$ players, where each player plays a match with any other player.
The orientation of an arc tells the winner of the match, i.e., we have the arc $(u, v) \in E$ iff $u$ beats $v$ in their match. In the following, we call \emph{arc lookup} or \emph{arc unfold} the operation of looking at the direction of an arc between two nodes.

We address the problem of finding a champion of the tournament, also known as \textit{Copeland winner}~\cite{GS:copeland1951reasonable}, which is a vertex in $V$ with the maximum out-degree, i.e., a player that wins the highest number of matches.
Our goal is to find a champion by minimizing the number of arc lookups, i.e., the number of matches played.
Note that a tournament graph may have more than one champion.
In this case, we aim at finding any of them, even if all the proposed algorithms are able to find all of them without increasing the complexity.

If the tournament is transitive---whenever $u$ wins against $v$ and $v$ wins against $w$, then $u$ wins against $w$---we can trivially identify the unique tournament champion with $\Theta(n)$ arc lookups.
Indeed, the champion is the only vertex that wins all its matches and, thus, we can perform a \textit{knock-out tournament} where the loser of any match is immediately eliminated.
However, finding the champion of general tournament graphs requires $\Omega(n^2)$ arc lookups~\cite{DBLP:journals/corr/GutinMR18}, and thus, there is nothing better to do than to play all the matches.
This means that the structure of the underlying tournament graph heavily impacts the complexity of the problem.

In this article, we parametrize the problem with the number $\ell$ of matches lost by the champion and we investigate efficient algorithms that find the champion by performing a number of arc lookups proportional to $\ell$.
This parametrization is motivated by many applications in Information Retrieval and Recommender Systems that exploit pairwise machine learning (ML) models. These models compare a pair of candidate players at a time to estimate who wins the match. The final champion of the tournament is the player winning the highest number of pairwise comparisons of the all-vs-all tournament induced by the machine-learned model~\cite{GS:ai2019learning,DBLP:journals/corr/abs-1901-04085}.
The parametrization we introduce is motivated by the fact that, nowadays, it is possible to design accurate pairwise models that achieve a low error rate in the estimation of the matches played by the champion.
For this reason, we expect a low number of matches $\ell$ lost by the champion, hence a quasi-linear number of arc lookups is required by our algorithms to find it.
This compares with the quadratic number of lookups needed by the previously known algorithms~\cite{DBLP:journals/corr/GutinMR18}.
For this reason, this paper proposes efficient algorithms to find the tournament champion by performing the (asymptotically) minimum number of calls to the machine learning model, i.e., arc lookups, needed to solve this problem.
A more detailed description of the application scenarios is reported at the end of this section.

\subsubsection*{Our Contributions} The novel contributions of this article are the following:
\begin{itemize}[leftmargin=*]
	\item we introduce an asymptotically-optimal deterministic algorithm that finds the champion by employing $O(\ell n)$ vertex comparisons, where $\ell$ is the minimum number of matches lost by any player. Moreover, we prove that $\Omega(\ell n)$ comparisons are necessary, even for randomized algorithms, to obtain a correct answer with any constant probability. It is worth noticing that we match a randomized lower bound with a deterministic algorithm, showing that randomization does not give any advantage to this problem.
	
	\item We extend our result to three strictly-related problems. First, we show how to retrieve all top-$k$ players in time $O(\ell_k n)$, where $\ell_k$ is the number of matches lost by the $k$-th best player. Second, we consider a model of computation in which we are allowed to play a batch of $B$ matches in parallel, and we design an algorithm that achieves optimal speedup with respect to the sequential version and it finds the champion by performing $O(\frac{\ell n}{B} + \ell \log B)$ arc lookups. This is useful in practice because pairwise comparisons can be batched when the inference is done on novel computing platforms like, for example, GPUs. 
	Third, we generalize the tournament problem in a probabilistic framework, where each arc $(u, v) \in E$ is labeled with the likelihood that $u$ wins against $v$.
	These probabilities can be interpreted as the confidence of the machine learning model about the outcome of the comparison. In this setting, we define the champion as the player that minimizes the expected number of matches lost and we introduce an algorithm to find all champions in time $\Theta(\ell n)$, where $\ell$ is the expected number of matches lost by the champion.
		
	\item We provide a comprehensive experimental assessment of the proposed algorithms. We evaluate their performance in terms of running time and number of comparisons against a baseline that perform all the possible pairwise comparisons between players. We focus our attention on a Question Answering task that asks to find the most relevant textual answer to a given question provided by a user~\cite{DBLP:journals/corr/abs-1910-14424}. Results show that our proposed algorithms allow us to speed up the identification of the correct answer of up to $13\times$ with respect to methods that play the full tournament.
\end{itemize}

\subsubsection*{Application Scenarios}
Our investigation is motivated by many application scenarios involving the efficient selection of the most relevant result from a pool of candidates, also known as top-$1$ retrieval.
It is a crucial task in many Information Retrieval and Recommender System applications including Web ad-hoc search~\cite{baeza1999modern}, question answering~\cite{DBLP:journals/tois/HerlockerKTR04}, conversational search~\cite{DBLP:conf/cikm/Maarek18}, etc.
A recent example in this line is conversational assistants.
These devices, such as Siri, Google Assistant, and Alexa, are becoming very popular nowadays.
They work by exploiting a new way of interaction with the user, where the latter interacts by asking a question and the former provides her the answer with the highest relevance with respect to the question.
Conversational assistants introduce a \emph{paradigm shift} in information retrieval as they change the way users submit their information needs to the information retrieval system, i.e., using spoken words and not textual queries.
Moreover, since the new paradigm employs a conversation as a means of interaction, only one result is provided to the user as an answer to her question.
As a consequence, the precision in the identification of the only answer to return is now of paramount importance to build an effective conversational system.

State-of-the-art solutions for solving the top-$1$ retrieval task rely on machine learning techniques~\cite{DBLP:journals/ftir/Liu09}, to select the answer with the highest relevance.
The selection of the most relevant result can be addressed in two different ways:
\textit{i}) by exploiting machine-learned techniques such as \lmart~\cite{DBLP:journals/ir/WuBSG10}, which are based on univariate scoring functions that individually estimate one candidate result at a time, to select the candidate achieving the highest relevance score;
\textit{ii}) by employing pairwise Learning-to-Rank techniques such as \dbert{}~\cite{DBLP:journals/corr/abs-1901-04085}, which are based on bivariate scoring functions that estimate a pair of candidate results at a time, e.g., a binary judgment stating which of the results is more relevant, to select the candidate achieving the highest sum of pairwise scores of an all-vs-all tournament.
While the former approach exploits only the information of a single result at a time for computing the ranking score, the latter approach is potentially more powerful because it exploits the information of two candidates at a time for computing the outcome of the tournament.
However, the latter approach, although effective, is more expensive than the former one as it performs a quadratic number of comparisons to score all pairs of candidate results, thus making pairwise approaches unappealing in scenarios with tight time constraints. Here is where our research is beneficial as we define algorithmic approaches that allow reducing the number of comparisons performed by the pairwise model to select the most relevant results thus speeding up the whole selection process.

\smallskip
The rest of the article is structured as follows: Section~\ref{bf:sec:related} discusses the related work while Section~\ref{bf:sec:complexity} provides a detailed analysis of the problem complexity, and Section~\ref{bf:sec:algorithm} presents an efficient algorithm to solve it. Moreover, Section~\ref{bf:sec:generalizations} discusses three variants of the algorithm that solve three extensions of the original problem. Finally, Section~\ref{bf:sec:experiments} presents a comprehensive analysis of proposed algorithms in a information retrieval (ad-hoc search) scenario, and Section~\ref{bf:sec:summary} concludes the work.


\section{Related Work}
\label{bf:sec:related}
Tournament graphs are a well-known model that has been applied to several different areas such as sociology, psychology, statistics, and computer science.  Examples of applications are round-robin tournaments, paired-comparison experiments, majority voting, communication networks, etc.~\cite{DBLP:reference/choice/BrandtBH16,GS:reid2013tournaments,DBLP:journals/mss/Hudry09,GS:laslier1997tournament,GS:moon1968topics}.
In this area, we identify two different research lines. The first one aims at finding \emph{the tournament winner}, while the second one aims at \emph{ranking the list of candidates} using pairwise approaches.
Given a ranking of candidates, we can easily define the champion as the top-$1$ element of a the global ranking, therefore the two tasks are related with each other.
In this section, we describe the most important results concerning these two problems.

According to previous works~\cite{DBLP:reference/choice/BrandtBH16,GS:laslier1997tournament,GS:moon1968topics}, there is no unique definition of the notion of a tournament winner. Nevertheless, all of them agree on defining the winner whenever there
is a candidate, called \textit{Condorcet winner}, which beats all the others. Different definitions of winner require different complexities of the algorithms used to identify it. The easiest case to consider appears when $T$ is a \textit{transitive tournament graph}, i.e., a directed acyclic graph, since it is trivial to find the Condorcet winner in linear time by performing a knock-out tournament where the loser of any match is immediately eliminated. Instead, for a general tournament $T$, the complexity of finding a winner is much higher and strictly depends on the definition of winner.

A winner as defined by Banks~\cite{GS:banks1985sophisticated} is the Condorcet winner of a maximal transitive sub-tournament of $T$. As there may be several of these sub-tournaments, the \textit{Banks solution} is the set of all these winners. The problem of finding just one winner can be computed in $\Theta(n^2)$ arc lookups, while finding all of them is a $\mathcal{NP}$-hard problem~\cite{DBLP:journals/mss/Hudry09}.

Slater~\cite{GS:slater1961inconsistencies} defined the winner starting from a ranking of candidates. He defined a \textit{Slater solution} to be a total order $\prec$ on vertices that minimizes the number of mis-ordered pairs of vertices, where a pair $(u, v)$ is mis-ordered if $u$ beats $ v$ and $u \prec v$. The champion is then defined as the maximum element with respect to $\prec$. However, the computation of the Slater solution is $\mathcal{NP}$-hard as it reduces from the \textit{Feedback Arc Set Problem}~\cite{DBLP:journals/cpc/CharbitTY07}.

Ailon \emph{et al.}~\cite{DBLP:journals/ml/AilonM10,DBLP:conf/colt/AilonM08} provide a bound to the error achieved by the Quicksort algorithm when used to approximate a Slater solution. The error is defined as the number of misordered pairs of vertices. Ailon \emph{et al.} show that the expected error is at most two times the best possible error. It is apparent that the proposed algorithm requires $\Omega(n \log n)$ arc lookups with high probability. Even though the overall approximation is good, this algorithm fails in finding a champion $w$ every time one of the Quicksort pivots beats $w$, hence it is not suitable for our purposes.

The results by Shen \emph{et al.}~\cite{DBLP:journals/siamcomp/ShenSW03} and Ajtai \emph{et al.}~\cite{DBLP:journals/talg/AjtaiFHN16} provide a ranking based on the definition of \textit{king}.
The vertex $u$ is a king if for every vertex $v$ there is a directed path from $u$ to $v$ of length at most $2$ in $T$.
The ranking algorithm by Jian \emph{et al.}~\cite{DBLP:journals/siamcomp/ShenSW03} finds a sorted sequences of vertices $u_1, u_2, \ldots, u_n$ such that for every $i$ 1) $u_i$ beast $u_{i+1}$, and 2) $u_i$ is a king in the sub-tournament induced by the items $u_{i}, u_{i+1}, \ldots, u_n$.
The authors provide a $O(n^{3/2})$ deterministic algorithm to compute this sequence. On the flip side, a $\Omega(n^{4/3})$ deterministic lower bound for the retrieval of a single king holds. In addition, quicksort produces such a sequence in $O(n \log n)$ comparisons w.h.p. and quickselect retrieves a king in expected linear time. 
To date the deterministic complexity of finding a king in a tournament is still unknown, however attempts at understanding the problem proceed relentless~\cite{kings-newest}.
Unfortunately, the definition of king is weaker than the one of Copeland winner. Indeed, the latter implies the former~\cite{GS:reid2013tournaments}, and it is possible to construct tournaments in which every vertex is a king. Thus the definition of king does not help us in the identification of the best candidate.

A prolific research line studies the ranking problem under \emph{persistent comparison errors}
\cite{DBLP:conf/esa/Geissmann0LP19,DBLP:conf/isaac/Geissmann0LP17,DBLP:conf/soda/BravermanM08,DBLP:conf/esa/KleinPSW11}.
This task deals with queries affected by random noise in a scenario where comparison errors are persistent.
In this setting, we consider the set of vertices as equipped with a transitive order $\prec$, and every arc of the tournament as the result of a noisy comparison between two items.
The answer associated to the comparison $(u,v)$ is consistent with the transitive order $\prec$ with probability $p \approx 1$ and inconsistent with probability $1-p \approx 0$. All comparisons are independent.
By defining the dislocation of $u$ as the difference between its real rank and the rank assigned by an algorithm, Geissmann \emph{et al.}~\cite{DBLP:conf/isaac/Geissmann0LP17} proved that every algorithm produces a ranking with maximum dislocation $\Omega(\log n)$ and total dislocation $\Omega(n)$.
A recent work by Geissmann \emph{et al.}~\cite{DBLP:conf/esa/Geissmann0LP19} settles the problem, matching both lower bounds in $O(n \log n)$ time.
Unfortunately, this model does not produce a strong enough guarantee on the quality of the champion, that is only known to be within the top $O(\log n)$ candidates of the original ranking.

A line of work on \emph{non-persistent comparison errors} studies noisy comparisons under the assumption that every comparison can be queried more than once and the results are all independent. Recently, progress has been made on approximate selection~\cite{approximate-selection-newest}, and more notably on minimum-selection~\cite{minimum-selection-newest} that is exactly the problem we tackle in this paper, with a different model for noise. In fact, Leucci and Liu~\cite{minimum-selection-newest} just settled the complexity of minimum-selection in the non-persistent comparison error model.

There are several other notions of winner, and most of them can be computed in polynomial time. We refer to Hudry~\cite{DBLP:journals/mss/Hudry09} for a complete survey on this topic.
The definition used in this paper is the one given by Copeland~\cite{GS:copeland1951reasonable}, called \textit{Copeland solution}, where we rank vertices according to the number of matches they win, and a champion is the candidate winning the most matches.
As we already mentioned, the Copeland solution requires $\Omega(n^2)$ arc lookups and there is a trivial algorithm to match it~\cite{DBLP:journals/corr/GutinMR18}.
However, Geissmann \emph{et al.}~\cite{DBLP:conf/fct/GeissmannMW15} considered a model, similar to the aforementioned persistent comparison errors model, in which errors are no longer stochastic but their total number is bounded.
They fix an upper-bound $e$ to the total number of errors and they propose an algorithm to find the Copeland winner of the resulting tournament in $O(n \sqrt{e})$ comparisons and time.

\subsubsection*{Advancements over Previous Work}
In this article, we advance the state of the art by reporting improvements over the result by Geissmann \emph{et al.}~\cite{DBLP:conf/fct/GeissmannMW15}.
In particular, we propose an algorithm that finds the Copeland winner in $\Theta(\ell n)$ time and comparisons, where $\ell$ is the minimum number of matches lost by any player, hence $\ell \leq \sqrt{e}$ meaning that our algorithm is at least asymptotically as fast as Geissmann \emph{et al.}~\cite{DBLP:conf/fct/GeissmannMW15}. It is worth noting that in our use case $\ell$ is very small, and so this parameterization is particularly insightful.
Moreover, our novel algorithm presented in Section \ref{bf:sec:algorithm} is oblivious with respect to $\ell$, while the algorithm by Geissmann \emph{et al.}~\cite{DBLP:conf/fct/GeissmannMW15} assume to know $e$ in advance.
Finally, we provide a \emph{randomized} lower bound that matches the complexity of our \emph{deterministic} and simple algorithm (Section~\ref{bf:sec:randomized-lb}).
One last remarkable contribution is the extension of our algorithm to work when comparisons can be performed in batches and we achieve virtually no asymptotic overhead with respect to perfect parallelism (Section~\ref{bf:sec:batched}).

\section{Lower bounds}
\label{bf:sec:complexity}
In this section, we prove the lower bound of the Copeland winner problem.
An adversarial argument is used by Gutin \emph{et al.} \cite{DBLP:journals/corr/GutinMR18} to prove that finding a champion requires $\Omega\left(n^2\right)$ arc lookups.
Therefore, the trivial algorithm that finds a champion by performing all the possible matches
is optimal in general.
The problem is indeed much more interesting if we parameterize it with $\ell$, the number of matches lost by the champion.
Note that $\ell$ is unknown to the algorithm.
The goal of this section is to prove that $\Omega(\ell n)$ arc lookups are necessary to find a champion.
We first show that this bound applies to deterministic algorithms. Then we generalize it to the class of ``Monte Carlo'' randomized algorithms that are allowed to return an incorrect answer with a fixed positive probability.
The latter result clearly implies the former.
However, for pedagogical reasons we report them in increasing order of difficulty.

\subsection{Deterministic Lower Bound}
The following theorem shows that any deterministic algorithm employs $\Omega(\ell n)$ arc lookups to find a champion.

\begin{theorem}
\label{bf:alg:lower_bound}
Any deterministic algorithm that finds a champion in a tournament graph $T$ with $n$ vertices and with $\ell$ matches lost by the champion requires $\Omega(\ell n)$ arc lookups.
\end{theorem}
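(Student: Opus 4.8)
The plan is to avoid building an explicit adversary and instead argue directly from a necessary \emph{certificate} condition that any correct deterministic algorithm must satisfy at the instant it halts. Fix any tournament $T$ whose champion $w$ loses exactly $\ell$ matches, run the algorithm on $T$, and let the transcript be the set of oriented arcs it has probed when it outputs $w$. For a vertex $x$, write $W_x$ and $L_x$ for the number of probed arcs that $x$ has won and lost, respectively. A single arc lookup creates exactly one recorded loss (the loser of that match), so $\sum_x L_x$ equals the number $q$ of lookups performed; this simple accounting identity is what will ultimately be summed against a per-vertex bound.

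The crux is an indistinguishability observation. Since the algorithm is deterministic, it behaves identically on every tournament consistent with the transcript: it issues the same queries, gets the same answers, and outputs the same $w$. Hence correctness forces $w$ to be a champion in \emph{every} completion of the probed arcs. I would exploit this by fixing a vertex $v \neq w$ and orienting all still-unprobed arcs incident to $v$ as wins for $v$ and all remaining arcs incident to $w$ as losses for $w$ (the arc joining $v$ and $w$, if unprobed, is set as a $v$-win, consistent with both choices); unprobed arcs touching neither vertex are oriented arbitrarily, giving a legal tournament. In this completion $\mathrm{outdeg}(v) = (n-1) - L_v$ while $\mathrm{outdeg}(w) = W_w$, so if $L_v < (n-1) - W_w$ then $v$ strictly outranks $w$ and $w$ is not a champion, contradicting correctness. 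Therefore $L_v \ge (n-1) - W_w$ for every $v \neq w$.

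To finish, I would turn this per-vertex inequality into the global bound. Because $w$ truly loses $\ell$ matches, its true out-degree is $n-1-\ell$, and the recorded wins obey $W_w \le n-1-\ell$; consequently $(n-1) - W_w \ge \ell$ and so $L_v \ge \ell$ for all $v \neq w$. Summing over the $n-1$ vertices other than $w$ yields $q = \sum_x L_x \ge \sum_{v \neq w} L_v \ge (n-1)\ell = \Omega(\ell n)$. Note this is stronger than needed: it is a per-instance lower bound holding for \emph{every} tournament that realizes the parameter $\ell$, not merely for one cleverly chosen hard instance, and it can equivalently be read as an adversary that keeps the loss counts balanced.

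I expect the main obstacle to be making the indistinguishability step fully rigorous: one must check that the constructed completion is a legal tournament agreeing with the transcript, so that determinism truly forces the identical output, and in particular handle the coupling between $\mathrm{outdeg}(v)$ and $\mathrm{outdeg}(w)$ through the possibly-unprobed arc joining them, together with the tie convention in the definition of champion (the argument only needs that $w$ fails to be \emph{a} maximum-out-degree vertex). Everything else — the identity $\sum_x L_x = q$ and the estimate $W_w \le n-1-\ell$ — is routine once the certificate condition is established.
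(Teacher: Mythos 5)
Your proof is correct and follows essentially the same route as the paper: both rest on the observation that a correct deterministic algorithm must, in effect, certify that every vertex other than the output loses at least $\ell$ probed matches, which is enforced by an adversarial completion of the unprobed arcs. You merely make the indistinguishability step explicit and sum recorded losses rather than incident probed arcs, which tightens the constant to $(n-1)\ell$ but changes nothing substantive.
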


\begin{proof}
The lower bound is proved by using an adversarial argument. Assume that an algorithm claims that a vertex $u$, losing $\ell$ matches, is a champion by performing $\frac{1}{2}\ell (n-1)$ arc lookups.
There must exist a node $v$ such that the algorithm has performed less than $\ell$ lookups of arcs incident to $v$.
We thus can let the algorithm be incorrect by adversarially setting $v$ as the winner of those matches, so that $v$ wins more matches than $u$.
In other words any correct algorithm, claiming that a vertex $u$ is a champion with $\ell$ matches lost, must be able to certificate its answer by showing: 1) a list of $n - 1 - \ell$ matches won by $u$ and 2) a list of $\ell$ matches lost by any other vertex $v$.
\end{proof}

\subsection{Randomized Lower Bound}
\label{bf:sec:randomized-lb}
We just proved that no deterministic algorithm can perform $o(\ell n)$ arc lookups and output a correctness certificate.
Now we extend such a non-existence result to any randomized algorithm, which is allowed to be wrong with a fixed probability.
This section is devoted to prove the following theorem stating that it does not exist a Monte Carlo
algorithm that finds the Copeland winner with $o(\ell n)$ arc lookups.

\begin{theorem}
\label{monte_carlo_thm}
Given a tournament $T$ with $n$ vertices and with $\ell$ matches lost by the champion, it does not exist a randomized algorithm that performs $o(\ell n)$ arc lookups and outputs the Copeland winner of $T$ with fixed positive probability.
\end{theorem}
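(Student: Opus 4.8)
The plan is to invoke Yao's minimax principle: to lower bound the expected number of arc lookups of any randomized (Monte Carlo) algorithm, it suffices to exhibit a single distribution $\mathcal{D}$ over tournaments---all having a champion that loses exactly $\ell$ matches---on which every \emph{deterministic} algorithm that stops after $o(\ell n)$ lookups fails to output the Copeland winner with probability bounded away from $0$. Since a randomized algorithm is a distribution over deterministic ones, its error probability on a worst-case input is at least its average error probability on $\mathcal{D}$, so such a distribution transfers the bound to the randomized setting and proves the theorem. This is the natural route because the deterministic bound already isolates the right obstruction (certifying $\ell$ losses per vertex), and Yao is the standard way to show that averaging over inputs cannot circumvent it.

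I would build $\mathcal{D}$ around a pool of $m = \Theta(\ell)$ \emph{candidate} vertices, each with out-degree close to $n-1-\ell$, together with $n-m$ \emph{filler} vertices whose out-degrees are forced to be much smaller (each filler loses $\gg \ell$ matches), so that the champion is necessarily one of the candidates and is the unique maximum. Exactly one candidate, chosen uniformly at random, is the true champion and loses exactly $\ell$ matches; every other candidate loses exactly $\ell+1$ matches. Crucially, for each candidate the $\ell$ (resp. $\ell+1$) opponents that beat it are placed uniformly at random among its $n-1$ matches. A short consistency check confirms such a tournament exists---the out-degree budget $\sum \deg^+ = \binom{n}{2}$ is respected because $\Theta(\ell)$ vertices of out-degree $\approx n-\ell$ only need to route their few losses to the filler set---and that it has champion loss exactly $\ell$ with a unique Copeland winner.

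The heart of the argument is a two-level counting bound. First, resolving a single candidate---deciding whether it loses $\ell$ or $\ell+1$ matches---requires $\Omega(n)$ lookups on that candidate: because its losses are rare ($\Theta(\ell)$ among $n-1$) and uniformly placed, an algorithm that has probed fewer than, say, $n/2$ of a candidate's arcs and has not yet uncovered an $(\ell+1)$-st loss cannot exclude that the missing loss sits among the unprobed arcs, so the posterior probability that this candidate is the unique $\ell$-loss champion stays bounded away from $0$ and $1$. Second, if the total number of lookups is $o(\ell n)$, then by a pigeonhole argument only $o(\ell)$ candidates can have received $\ge n/2$ lookups, leaving $\Omega(\ell)$ \emph{under-probed} candidates; conditioned on the transcript, the true champion is then essentially uniform over these $\Omega(\ell)$ indistinguishable candidates, so any output is correct with probability $O(1/\ell) = o(1)$, contradicting constant success. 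Hence $\Omega(\ell n)$ lookups are necessary.

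I expect the main obstacle to be the per-candidate $\Omega(n)$ bound and its clean combination across candidates under \emph{adaptive} querying: one must argue, via a transcript-conditioned posterior estimate on the uniformly placed hidden loss, that interleaving queries among candidates and stopping early cannot help---an algorithm can certify a candidate as the champion only by essentially scanning all of its arcs, and it expects to scan $\Theta(\ell)$ candidates before meeting the right one. A secondary, more routine, obstacle is verifying that $\mathcal{D}$ is supported on genuine tournaments with champion loss exactly $\ell$ and a unique Copeland winner, so that the parameter in the statement is faithfully $\ell$.
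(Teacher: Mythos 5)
Your high-level route is the same as the paper's: apply Yao's minimax principle to a planted distribution in which one of $\Theta(\ell)$ ``candidate'' vertices, chosen uniformly at random, is the unique champion, its distinguishing losses are placed uniformly at random among its matches, and the remaining vertices are forced to lose far more than $\ell$ matches. The paper packages exactly this instance as an auxiliary ``anomalous row problem'' (a $k\times m$ Boolean matrix with one row containing $k$ zeroes and all others $k+1$) embedded as an off-diagonal block of the adjacency matrix, with regular tournaments on the diagonal blocks playing the role of your fillers; your candidates are its $k$ rows, permuted and column-shuffled uniformly at random.

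Where you diverge---and where your plan has a genuine gap---is the final probability bound. Your argument rests on the claim that, conditioned on the transcript, each under-probed candidate has posterior probability of being the champion ``bounded away from $0$ and $1$,'' so that the champion is ``essentially uniform'' over the $\Omega(\ell)$ under-probed candidates. As literally stated this cannot hold: $\Omega(\ell)$ posteriors summing to at most $1$ cannot all be bounded away from $0$. What you actually need is that the likelihood ratios between under-probed candidates are all $\Theta(1)$, hence the maximum posterior is $O(1/\ell)$; establishing this under adaptive querying---where probe counts and observed loss counts differ across candidates and the algorithm chooses whom to probe heavily based on what it has seen---is precisely the delicate step you flag but do not carry out. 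The paper sidesteps it with a coloring argument: it plants $\Gamma = \min\bigl(\sqrt{km/P},\,k\bigr)$ distinguished cells (one $1$-cell in the anomalous row, one $0$-cell in each of $\Gamma-1$ other rows), bounds the probability of ever probing a planted cell by $P\Gamma/(km) \le \sqrt{P/(km)}$, and notes that conditioned on missing all of them the $\Gamma$ marked rows are exchangeable, giving success probability at most $\sqrt{P/(km)} + 1/\Gamma \rightarrow 0$. The balanced choice of $\Gamma$ is what makes both error terms vanish simultaneously; your pigeonhole over all $\Theta(\ell)$ candidates replaces this balancing but shifts the entire burden onto the unproved posterior estimate. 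So the plan is salvageable, but the crux---turning ``under-probed'' into ``indistinguishable''---still requires a correct quantitative argument.
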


To prove the theorem above, we need to define the auxiliary problem below and operate a reduction.
\begin{definition}[Anomalous Row Problem]
\label{anomalous_row_problem}
Given a matrix $M \in \mathbb{F}_2^{k \times m}$ such that every row but one presents $k + 1$ zeroes and the remaining one presents $k$ zeroes, find the $k$-zeroes row.
\end{definition}

We will see that the anomalous row problem is not harder than the problem of finding the Copeland winner: technically we will show a reduction between these two problems. Moreover, proving a randomized lower bound for the anomalous row problem turns out to be easier.

The next lemma bounds from below the number of $M$'s entries that must be probed in order to solve the anomalous row problem. This bound is strictly related to Theorem~\ref{monte_carlo_thm}, as we will see shortly.
\begin{lemma}\label{anomalous_thm}
It does not exist a randomized algorithm that solves the anomalous row problem (Definition~\ref{anomalous_row_problem}) by probing $o(km)$ cells of the input matrix $M$ and returns the correct answer with fixed positive probability.
\end{lemma}

To ease the discussion, we defer the proof of Lemma~\ref{anomalous_thm} to the end of this section. First, we show that if there exists an algorithm violating Theorem~\ref{monte_carlo_thm} then we can design an algorithm that violates Lemma~\ref{anomalous_thm}.
Thus, proving Lemma~\ref{anomalous_thm} is sufficient to prove Theorem~\ref{monte_carlo_thm}.

Given an instance of the anomalous row problem, $M \in \mathbb{F}_2^{k \times m}$, we can assume that $k$ and $m$ are odd and $m > 3k$.
Indeed, if this is not the case, it is sufficient to add a dummy row containing $k + 1$ zeroes and several dummy columns containing only ones.
It is apparent that this modification preserves both the $k$-zeroes row and the asymptotic complexity.
Then, we construct a tournament having $n = k + m$ players and adjacency matrix

\begin{equation*}
  A =
  \left[
\begin{array}{cc}
  B & M \\
  \widetilde{M}^T & C
\end{array}
\right] \in \mathbb{F}_2^{n \times n}
\end{equation*}
where $B \in \mathbb{F}_2^{k \times k}$ and
$C \in \mathbb{F}_2^{m \times m}$ are the adjacency matrices of regular tournaments\footnote{A $(2j + 1)$-vertices tournament is said to be \textit{regular} if every vertex has out-degree $j$.} and
$\widetilde{M}$ is the \emph{complementary matrix} of $M$, meaning that $\widetilde{M}_{i, j} = 1 - M_{i, j}$.

We can easily prove that the champion is among the first $k$ players and loses exactly $\ell = (3k - 1) / 2$ matches.
In fact, due to regularity, every row in $B$ contains exactly $(k - 1) / 2$ zeroes and $M$ satisfies the hypotheses of Definition~\ref{anomalous_row_problem}.
Thus, every player among the first $k$ ones loses either $\ell$ or $\ell + 1$ matches.
On the other hand, any player in the last $m$ rows, loses at least $(m - 1) / 2$ matches, and
$m > 3k$ guarantees that $(m - 1) / 2 > \ell$.
Therefore, if we find a champion of the constructed tournament then we automatically solve the
anomalous row problem.

We are now left to prove Lemma~\ref{anomalous_thm}.
First we enunciate a game-theoretic lemma by Yao \cite{DBLP:conf/focs/Yao77} declined within the terms of our problem.

\begin{lemma}[Yao's Lemma]
Let $\A$ be the family of \textit{deterministic} algorithms that output a, possibly wrong, solution to the anomalous row problem and probe $o(km)$ cells.
Consider $\A$ equipped with a probability distribution.
Then consider the function $\Ci(A, x)$ that returns $1$ if the algorithm $A$ is correct on input $x$ and $0$ otherwise. Finally, consider a probability distribution $\D$ over $\mathbb{F}_2^{k \times m}$. We have

\begin{equation*}
	\min_{x \in \D}\mathbb{E}_{\A}\l[\Ci(A, x)\r] \leq
	\mathbb{E}_{\A \otimes \D}\l[\Ci(A, x)\r] \leq
	\max_{A\in \A} \mathbb{E}_{\D}\l[\Ci(A, x)\r].
\end{equation*}
\end{lemma}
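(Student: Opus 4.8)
The plan is to derive both inequalities from two elementary facts---a minimum never exceeds an average, and an average never exceeds a maximum---combined with the factorization of the expectation over the product distribution $\A \otimes \D$. First I would fix the probability distribution on $\A$, with respect to which the expectation $\mathbb{E}_{\A}$ is taken, and keep $\D$ as the distribution over inputs. Since $\Ci(A, x) \in \{0, 1\}$ is bounded and everything is discrete---inputs range over the finite set $\mathbb{F}_2^{k \times m}$, and, up to input--output behaviour, $\A$ may be taken finite---all the expectations are well defined and finite, so no convergence or measurability issues arise and I may freely swap the order of the two expectations.

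For the left inequality, I would introduce $g(x) := \mathbb{E}_{\A}\l[\Ci(A, x)\r]$, the success probability of the randomized algorithm on a fixed input $x$. Because the two draws are independent, Fubini gives $\mathbb{E}_{\A \otimes \D}\l[\Ci(A, x)\r] = \mathbb{E}_{x \sim \D}\l[g(x)\r]$. As $g(x) \geq \min_{x'} g(x')$ pointwise, taking the expectation over $x \sim \D$ preserves the bound and yields $\min_{x} g(x) \leq \mathbb{E}_{x \sim \D}\l[g(x)\r]$, which is exactly the claimed left-hand inequality.

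For the right inequality I would argue symmetrically, setting $h(A) := \mathbb{E}_{\D}\l[\Ci(A, x)\r]$, the average-case correctness of the deterministic algorithm $A$ against $\D$. Fubini again gives $\mathbb{E}_{\A \otimes \D}\l[\Ci(A, x)\r] = \mathbb{E}_{\A}\l[h(A)\r]$, and since $h(A) \leq \max_{A'} h(A')$ for every $A$ in the support, averaging over $A \sim \A$ produces $\mathbb{E}_{\A}\l[h(A)\r] \leq \max_{A} h(A)$, the right-hand inequality.

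The whole content sits in these two one-line estimates, so there is no genuine obstacle beyond bookkeeping; the only point deserving care is the meaning of $\A \otimes \D$, since both expectation-swaps are legitimate precisely because $A$ and $x$ are sampled independently, and I would state this independence explicitly before invoking Fubini. It is worth flagging how the lemma feeds the argument: the right-hand side will let us bound the best worst-case success of any $\o\l(km\r)$-probe randomized algorithm by the best average-case success of a deterministic $\o\l(km\r)$-probe algorithm against a hard distribution $\D$ of our choosing, which is the mechanism used to establish Theorem \ref{anomalous_thm}.
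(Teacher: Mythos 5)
Your proof is correct. Note, however, that the paper does not actually prove this lemma at all: it is stated as a known game-theoretic result and attributed to Yao's 1977 paper, with the text immediately proceeding to use the right-hand inequality to reduce the randomized lower bound to a distributional one. Your argument --- factoring $\mathbb{E}_{\A \otimes \D}$ via Fubini (trivially justified here since $\Ci$ is bounded and both distributions are over finite sets) and then applying ``min $\leq$ average'' in $x$ for the left inequality and ``average $\leq$ max'' in $A$ for the right --- is the standard proof of Yao's minimax principle in this direction, and it correctly supplies the justification the paper omits. You also correctly identify how the lemma is deployed afterwards, namely that bounding $\max_{A\in\A}\mathbb{E}_{\D}\l[\Ci(A,x)\r]$ for a suitably chosen hard distribution $\D$ suffices to bound the worst-case success probability of any Monte Carlo algorithm.
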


We know that a Monte Carlo algorithm that proves $o(kn)$ cells can be represented as a probability distribution $\A$, in fact it just tosses some coins at run-time and it decides which algorithm to branch into.
Therefore $\min_{x \in \D}\mathbb{E}_{\A}\l[\Ci(A, x)\r]$ is the probability of the Monte Carlo algorithm defined by $\A$ of being right in the worst case, and $\max_{A\in \A} \mathbb{E}_{\D}\l[\Ci(A, x)\r]$ is the average case of the best deterministic algorithm against a random input with distribution $\D$. Finally, we prove Lemma~\ref{anomalous_thm}.

\begin{proof}[Proof of Lemma~\ref{anomalous_thm}]
It is sufficient to show an input distribution $\D$ such that any deterministic algorithm with running time $o(km)$ succeeds with arbitrarily small probability, for $k, m \rightarrow \infty$.
We choose the permutation $\phi $ of $\{1 \dots k\}$ and $k$ permutations $\sigma_1 \dots \sigma_k$ of $\{1 \dots m\}$ uniformly at random.
Consider the random matrix $X \in \mathbb{F}_2^{k \times m}$ such that
\begin{equation*}
	X\l[i, j\r] = M\l[\phi(i), \sigma_i(j)\r]
\end{equation*}
where $M$ is a deterministic input matrix as in Definition~\ref{anomalous_row_problem}.
Let $\D$ be the distribution of $X$, and $A \in \A$ be the algorithm such that $\mathbb{E}_{\D}\l[\Ci(A, x)\r]$ is maximum.
It is sufficient to show that $\mathbb{E}_{\D}\l[\Ci(A, x)\r] \rightarrow 0$ to prove that no Monte Carlo algorithm can perform less than $\Omega\l(km\r)$ cell probes.
Consider the maximum number $P$ of cells probed by $A$ and define $$\Gamma_{k, m} = \min\l(\sqrt{\frac{km}{P}}, \,k\r).$$
We now color $\Gamma_{k, m}$ cells in the input matrix.
We first color a $1$-valued cell in the $k$-zeroes row, then we choose $\Gamma_{k, m} - 1$ rows containing $k + 1$ zeroes and color a $0$-valued cell drawn from each of those.
To this end, we assume to perform such coloring before randomizing the input.
We want to estimate the probability that the algorithm probes any colorful cell.
Define the event $E_i$ ``the algorithm picks a colorfull cell during the $i$-th probe''.
The probability of $E_i$ is $\Gamma_{k, m} / km$ since the chosen
cell's row contains a colorful cell with probability $\Gamma_{k, m} /
k$ and, given that
, the probability of picking the colorful cell is $1 / m$.

Therefore,

\begin{equation*}
  \mathbb{P}_{\D}\l(\bigcup_{i = 1}^P E_i\r)  \leq
  \sum_{i = 0}^P \mathbb{P}_{\D}\l(E_i\r)  \leq
  \frac{P\Gamma_{k, m}}{km}  \leq
  \sqrt{\frac{P}{km}}.
\end{equation*}
Finally, we notice that, in case none of the colorful cells is probed, the algorithm ``sees'' a perfectly symmetric distribution over the $\Gamma_{k, n}$ rows containing a colorful cell.
Therefore, the best it can do is to produce a random output, which is right with probability $1 / \Gamma_{k, n}$, at most. To conclude, consider $E = \bigcup_{i = 1}^{P} E_i$, we have

\begin{equation*}
  \begin{split}
    \mathbb{E}_{\D}\l[\Ci(A, x)\r] & \leq
    \mathbb{P}_{\D}\l(E\r) + \mathbb{P}_{\D}\l(\Ci(A, x) = 1 \,|\, E^c \r) \\
    & \leq \frac{P\Gamma_{k, n}}{kn} + \frac{1}{\Gamma_{k, n}}
    \rightarrow 0
  \end{split}
\end{equation*}
where the last limit holds for $k$ and $n$ that goes to infinity simultaneously.
\end{proof}


\section{Optimal deterministic algorithm}
\label{bf:sec:algorithm}

In this section, we present a simple, deterministic, and asymptotically optimal algorithm that finds every champion in $\Theta(\ell n)$ arc lookups and time.
We first introduce the algorithm.
Then, we prove its correctness and we bound the number of arc it lookups.
Finally, we discuss some implementation details to show that the number of operations performed by the algorithm is $\Theta(\ell n)$ and the space required is linear.

\subsection{Algorithm Description}
We detail our algorithm in Algorithm~\ref{bf:alg:pseudocode}.
The number $\ell$ of matches lost by the champion is unknown to the algorithm.
Thus, it performs an exponential search to find the suitable value of $\alpha$ such that $\alpha/2 \leq \ell < \alpha$ (line~\ref{bf:alg:pseudocode_outer}) so to solve the problem by assuming that the champion loses less than $\alpha$ matches.

\begin{algorithm}[htb]
  \caption{}\label{bf:alg:pseudocode}
  \begin{algorithmic}[1]
    \Procedure{FindChampion}{$T = \left(V, E\right)$}
      \For {($ \alpha = 1 $; true; $\alpha = 2 \alpha$)} \label{bf:alg:pseudocode_outer}
        \State {$ A = V $}
        \State {$ S = \left\{(u,u) \mid u \in V\right\} $}
        \State {$ \forall u \in V \;\;lost[u] = 0 $} 
        \While {$ |A| > 2 \alpha$} \label{bf:alg:pseudocode_inner}
          \State {choose a pair of vertices $u, v$ in $A^2\setminus S$} \label{bf:alg:pseudocode_pairselection}
          \State {$ S = S \cup \left\{(u, v), (v, u)\right\}$} 
          \State {$loser =$} \algorithmicif\ {$(u,v) \in E$} \algorithmicthen\ {$v$} \algorithmicelse\ $u$
          \State {$\mathrel{++}\!lost[loser]$} \label{bf:increment}
          \If {$ lost[loser] \geq \alpha $}
            \State {$ A = A \setminus \left\{loser\right\} $} \label{bf:alg:pseudocode_kill}
          \EndIf
        \EndWhile
        \State {$ c, lost_c =$ \Call{FindChampionBruteForce}{$A$, $E$}}
        \If {$lost_c < \alpha $}
          \label{bf:alg:pseudocode_exitcondition}
          \Return {$c$}
        \EndIf
      \EndFor
    \EndProcedure
  \end{algorithmic}
\end{algorithm}

At each iteration, the algorithm maintains a set $A$ of ``alive'' vertices that is initially equal to $V$.
Then, it performs an elimination tournament among the vertices in $A$ by eliminating a player each time it loses $\alpha$ matches (line~\ref{bf:alg:pseudocode_kill}) until only $2\alpha$ vertices remain alive (line~\ref{bf:alg:pseudocode_inner}).
This stop condition guarantees the convergence of the algorithm.
The matches are selected arbitrarily to avoid to play the same match multiple times (line~\ref{bf:alg:pseudocode_pairselection}).
When the elimination tournament ends, a candidate champion is found via the \textsc{FindChampionBruteForce} procedure, which exhaustively finds
the vertex $c$ of $A$ with the maximum out-degree in $T$.
Whenever the candidate $c$ loses at least $\alpha$ matches (line~\ref{bf:alg:pseudocode_exitcondition}), the value of $\alpha$ is not the correct one and the champion may have been erroneously eliminated before.
Thus, $c$ could not be a champion and the algorithm continues with the next value of $\alpha$ (line~\ref{bf:alg:pseudocode_outer}).


In the reminder of this section, we prove the following theorem stating that Algorithm~\ref{bf:alg:pseudocode} matches the number of arc lookups indicated by the lower bound (Theorem~\ref{bf:alg:lower_bound}) and requires linear space.

\begin{theorem}
\label{theorem:algorithm_guarantees}
Given a tournament graph $T$ with $n$ vertices and with $\ell$ matches lost by the champion, Algorithm~\ref{bf:alg:pseudocode} finds every champion with $\Theta(\ell n)$ arc lookups and time. It also requires linear space.
\end{theorem}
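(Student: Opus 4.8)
The plan is to establish, in order, (i) correctness, (ii) that the exponential search in line~\ref{bf:alg:pseudocode_outer} terminates at a phase with $\alpha = \Theta(\ell)$, (iii) the $\O(\ell n)$ bound on arc lookups, and finally (iv) the time and space bounds.

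For correctness I would first record the loop invariant that, throughout a phase with parameter $\alpha$, $lost[u]$ equals the number of matches $u$ has lost among those played so far, and that a vertex leaves $A$ in line~\ref{bf:alg:pseudocode_kill} exactly when this counter reaches $\alpha$. Since each counter is a lower bound on the \emph{true} number of losses of $u$ in $T$, any vertex whose total number of losses is strictly less than $\alpha$ can never be eliminated, hence survives in the final set $A$. Two consequences follow. First, because the global minimum number of losses is $\ell$, every vertex $c$ satisfies $lost_c \ge \ell$; therefore the exit test $lost_c < \alpha$ of line~\ref{bf:alg:pseudocode_exitcondition} cannot succeed while $\alpha \le \ell$, so the algorithm never returns during an ``underestimating'' phase. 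Second, whenever the test does succeed we have $lost_c < \alpha$, and any hypothetical vertex $w$ with $lost_w < lost_c$ would then have fewer than $\alpha$ losses and hence would survive in $A$, contradicting that $c$ is the minimum-loss vertex of $A$ returned by \textsc{FindChampionBruteForce}; thus $c$ is a genuine Copeland winner. Returning the whole set of minimum-loss vertices of $A$ (a trivial variant of the brute-force routine) then reports \emph{every} champion, since in the first phase with $\alpha > \ell$ all champions survive and the minimum loss attained in $A$ equals $\ell$.

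Next I would bound the phases. The smallest power of two exceeding $\ell$, call it $\alpha^\star \in (\ell, 2\ell]$, is a phase in which every champion survives and \textsc{FindChampionBruteForce} returns a champion $c$ with $lost_c = \ell < \alpha^\star$; hence the exit test succeeds and the search stops no later than $\alpha^\star = \O(\ell)$. To see that an individual phase itself terminates, I would show that line~\ref{bf:alg:pseudocode_pairselection} always finds an unplayed pair while $|A| > 2\alpha$: if all $\binom{|A|}{2}$ intra-$A$ matches had been played, some alive vertex would have at least $(|A|-1)/2 \ge \alpha$ losses and would already have been removed, a contradiction; since $S$ grows monotonically and is bounded, the while loop must halt. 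For the lookup bound I then argue phase-by-phase. In a phase with parameter $\alpha$ every vertex carries at most $\alpha$ recorded losses (eliminated ones exactly $\alpha$, survivors at most $\alpha-1$), so the number of matches played in the while loop equals $\sum_u lost[u] \le \alpha n$. The final brute force computes the out-degree in $T$ of each of the $|A| \le 2\alpha$ survivors, costing $\O(\alpha n)$ further lookups (in the degenerate case $2\alpha \ge n$ the while loop is skipped and the brute force costs $\O(n^2) = \O(\alpha n)$). Thus each phase uses $\O(\alpha n)$ lookups, and since $\alpha$ doubles across phases the total telescopes geometrically to $\O(\alpha^\star n) = \O(\ell n)$, matching the lower bound of Theorem~\ref{bf:alg:lower_bound}.

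The step I expect to be the main obstacle is the time and space accounting, specifically realizing line~\ref{bf:alg:pseudocode_pairselection} (``choose a pair in $A^2 \setminus S$'') in $\O(1)$ amortized time and $\O(n)$ space, since storing $S$ explicitly would cost $\Theta(n^2)$. The plan is to keep the alive vertices in a doubly linked list supporting $\O(1)$ deletion, together with one ``next opponent'' pointer per vertex that scans the list of alive vertices, eliminated vertices being skipped through the successor pointers so that each match is produced with $\O(1)$ amortized bookkeeping and no pair is generated twice. Charging this overhead, together with the $\O(1)$ work per match inside the loop and the $\O(\alpha n)$ work of the brute force, to the matches already counted shows the running time is $\O(\ell n)$; the only structures maintained are the $lost$ array, the alive-set linked list, and the per-vertex pointers, all of size $\O(n)$, giving the claimed linear space.
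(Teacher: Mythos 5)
Your proposal is correct and follows essentially the same route as the paper: champions losing fewer than $\alpha$ matches can never be eliminated, the handshake/pigeonhole argument guarantees an eliminable vertex whenever $|A| > 2\alpha$, each phase costs $\O(\alpha n)$ lookups (elimination bounded by $\sum_u lost[u] \le \alpha n$ plus the brute force on $\le 2\alpha$ survivors), the doubling of $\alpha$ telescopes to $\O(\ell n)$, and the cursor/linked-list realization of the pair selection gives linear space. Your treatment of why the exit test cannot fire prematurely (and why a successful exit certifies a true champion at any phase) is in fact slightly more explicit than the paper's, but it is a refinement of the same argument rather than a different one.
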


\subsection{Correctness}
Let us first assume that the value of $\alpha$ is such that $\alpha/2 \leq \ell < \alpha$.
We now prove that, under this assumption, the algorithm correctly identifies a champion.
First, we observe that the algorithm cannot eliminate the champions as each of them loses less than $\alpha$ matches.
Thus, if we prove that the algorithm terminates, the set $A$ contains all the champions and the \textsc{FindChampionBruteForce} procedure will identify any
(potentially all) of them.
Note that a champion of $T$ may not be a champion of the sub-tournament restricted to only the vertices in $A$.
This is why \textsc{FindChampionBruteForce} procedure computes the out-degrees of all vertices in $A$ by looking at the edges of the original tournament $T$.
We use the following lemma to prove that, eventually, the condition $|A| = 2 \alpha$ is met and the algorithm terminates.



\begin{lemma}\label{lemma:vertex_exists}
In any tournament $T$ of $n$ vertices there is at least one vertex having in-degree $(n-1)/2$.
\end{lemma}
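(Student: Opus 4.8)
The plan is to prove this by a straightforward double-counting (averaging) argument on in-degrees, reading the statement as asserting the existence of a vertex whose in-degree is \emph{at least} $(n-1)/2$. The crucial structural fact is that $T$ is an oriented \emph{complete} graph: every unordered pair of distinct vertices carries exactly one arc, and that arc contributes exactly one unit to the in-degree of its head.

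First I would count the total in-degree $\sum_{v \in V} \deg^-(v)$. Since each of the $\binom{n}{2}$ arcs of $T$ is directed into exactly one vertex, every arc is counted once in this sum, so $\sum_{v \in V} \deg^-(v) = \binom{n}{2} = \frac{n(n-1)}{2}$. Next I would apply the pigeonhole/averaging principle: the maximum cannot fall below the mean, hence $\max_{v \in V} \deg^-(v) \ge \frac{1}{n}\sum_{v \in V}\deg^-(v) = \frac{1}{n}\cdot\frac{n(n-1)}{2} = \frac{n-1}{2}$. Any vertex attaining this maximum is the one claimed by the lemma.

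The main obstacle is essentially nonexistent, since this reduces to a one-line counting argument; the only point requiring care is fixing the direction of the inequality and confirming it is the form the rest of the analysis needs. Indeed, the lemma is intended to certify progress of the while loop in Algorithm~\ref{bf:alg:pseudocode}: when the alive set satisfies $|A| = m > 2\alpha$, i.e.\ $m \ge 2\alpha + 1$, the sub-tournament induced on $A$ has, by this lemma, a vertex of in-degree at least $(m-1)/2 \ge \alpha$. Such a vertex has already accumulated $\alpha$ losses and must therefore have been eliminated at line~\ref{bf:alg:pseudocode_kill}, which shows that as long as $|A| > 2\alpha$ the loop cannot stall and continues to shrink $A$ until the stopping condition is met.
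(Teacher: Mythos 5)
Your proof is correct and coincides with the paper's own argument: both sum the in-degrees to $\binom{n}{2}$ and apply averaging to conclude that some vertex has in-degree at least $(n-1)/2$. Your added remarks on how the lemma certifies termination of the elimination loop also match the paper's surrounding discussion.
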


\begin{proof}
The sum of the in-degrees of all vertices of $T$ is exactly $\binom{n}{2} = \frac{n(n-1)}{2}$.
Since there are $n$ vertices, there must be at least one vertex with in-degree $\frac{n-1}{2}$.
\end{proof}

Thus, each tournament of $2\alpha + 1$ vertices, or more, has at least one vertex losing at least $\alpha$ matches.
This means that the algorithm has always the opportunity to eliminate a vertex from $A$ until there are $2 \alpha$ vertices left.
Notice that the above discussion is valid for any value of $\alpha$ smaller than the target one.
Thus, any iterations of the exponential search will terminate and it eventually finds a suitable value of $\alpha$, i.e., $\alpha/2 \leq \ell < \alpha$, where a champion will be identified.

\subsection{Complexity}
We now present an analysis of the complexity of the algorithm.
Let us first consider the cost of an iteration of the exponential search.
We observe that each arc lookup increases one entry of $lost$ by one and that none of these entries is ever greater than $\alpha$.
Thus, the elimination tournament takes no more than $n \alpha$ arc lookups.
Moreover, the \textsc{FindChampionBruteForce} procedure takes less than $2 n \alpha$ arc lookups since it just considers every arc of the remaining $2 \alpha$ alive nodes.
Thus, an iteration of the exponential search takes less than $3 n \alpha$ arc lookups.

We get the complexity of the overall algorithm by summing up over all the possible values of $\alpha$, which are all the powers of $2$ from $1$ up to $2\ell$.
Thus, we have at most $3n \sum_{i=0}^{\lceil \log_2 (2\ell) \rceil} 2^i=O(\ell n)$ arc lookups.

\subsection{Implementation Details}
\label{sec:implementation_details}
We now prove that Algorithm~\ref{bf:alg:pseudocode} can be implemented in $\Theta(\ell n)$ time and linear space.
We do this by exploiting the fact that Algorithm~\ref{bf:alg:pseudocode} allows us to choose any arc as soon as its vertices are alive and it has never looked up before.
An efficient implementation is achieved by maintaining two arrays of $n$ elements each: an array $A$ storing the alive vertices and an array $lost$ storing the number of matches lost by each vertex.
A counter $numAlive$ stores the number of alive vertices.
Our implementation maintains the invariant that the prefix $A[1,numAlive]$ contains only alive vertices.
We use two cursors $p_1$ and $p_2$ to iterate over the elements in $A$.
At the beginning $p_1 = 1$, $p_2 = 2$ and $numAlive = n$.
Our implementation performs a series of matches involving vertex $A[p_1]$ and all other vertices in $A[p_1+1, numAlive]$, thus, advancing the cursor $p_2$.
Then, it moves $p_1$ to the next position.
After every match between $A[p_1]$ and $A[p_2]$, we increment $lost$ of the loser, say vertex $v$.
Whenever $lost[v]$ equals $\alpha$, we eliminate $v$ according to the following two cases, then we decrease $numAlive$ by one.
The first case occurs when $v$ is $A[p_1]$.
We swap $A[p_1]$ and $A[numAlive]$, we end the current series of matches, and we start a new one.
The second case occurs when $v$ is $A[p_2]$. Here, we swap $A[p_2]$ and $A[numAlive]$, and we continue the current series of matches.
In both cases, we decrease $numAlive$ by $1$ so that we preserve the invariant.

A similar, slightly less efficient, implementation employs a linked list $A$ to store
the alive vertices.
In this implementation, the removal of an element from the list is trivial,
and $p_1$ and $p_2$ are pointers that always advance in the list.
When $p_2$ reaches the end of the list, we advance $p_1$ by one position in the
list and we set $p_2$ to point to the element just after $p_1$.
This implementation allows us to process the vertices according to the input order
(as we never swap elements), which may be desirable in practice if we can somehow
predict the strongest of the vertices and sort them according.

As each step of the exponential search ignores the arc lookups of the previous steps, i.e., certain arcs may be considered more than once.
Therefore, to reduce the number of arc lookups preserving the time complexity at the cost of using $\Theta(\ell n)$ space instead of $O(n)$, an hash table can be employed to store the results of all arc lookups across the exponential search steps so to avoid unnecessary repeated computations.
In detail, each time Algorithm~\ref{bf:alg:pseudocode} wants the result of a match, it checks the hash table first and, only if this is a new arc lookup, the algorithm compute the result of the match and stores the result in the hash table for the next exponential search steps.


\section{Generalizations of the Problem}
\label{bf:sec:generalizations}
We now discuss some generalizations of the Copeland winner problem and we modify Algorithm~\ref{bf:alg:pseudocode} to solve these problems efficiently.
First, we show how to retrieve the top $k$ items, i.e., not only the top-$1$, by maintaining the complexity proportional to the number of matches lost by the $k$-th player. Then, we consider the case of a binary machine learned classifier returning a pair of probabilities instead of a binary outcome and redefine the problem in a probabilistic fashion. Finally, we consider the case in which we are able to process batches of arc lookups in parallel, so to exploit parallel processing units, e.g., GPUs.

\subsection{Top-\ensuremath{k} retrieval Version}
A simple and useful generalization of the Copeland winner problem is to find the top-$k$ results, i.e., the $k$ vertices with the highest out-degrees.
In this setting, the exponential search of Algorithm \ref{bf:alg:pseudocode} can be modified to find the minimum value of $\alpha$ such that the number $\ell_k$ of matches lost by the $k$-th result is between $\alpha / 2$ and $\alpha$.
To this end, the exponential search must end whenever it finds $k$ vertices with less than $\alpha$ comparisons lost.
To accomplish this task, the \textsc{FindChampionBruteForce}$(A, E)$ procedure must be modified to return the indices of the top-$k$ results of $A$ along with number of matches lost by them.
Since $\ell_1 \leq \ell_2 \leq \ldots \leq \ell_n$, the higher the value of $k$, the higher the time complexity $O(n \ell_k)$ of the algorithm.

\subsection{Probabilistic Version}
Typically, the outcome of a pairwise classifiers is not a binary response, instead it is a pair of complementary probabilities that can be interpreted as the algorithm's confidence about the comparison's outcome.
Thus, a natural generalization of the Copeland winner problem emerges if we associate to each arc $(u, v)$ the probability $p_{u, v}$ of $u$ beating $v$.
Since the probabilities are complementary, we also know that $p_{v, u} = 1 - p_{u, v}$.
We refer to this graph as \emph{probabilistic tournament graph}. In this setting, the arcs are Bernoulli random variables, and we define the champion as the player $u$ minimizing the expected number of matches lost, i.e., $\sum_{v \in V} p_{v, u}$ by linearity of the expectation.
Since we want our complexity to be parameterized with the expected number of matches lost by the champion, we coherently call this quantity $\ell$.
In this section, we show that Algorithm~\ref{bf:alg:pseudocode} needs only little adaptation to work in this setting.

Consider the pseudocode of Algorithm~\ref{bf:alg:pseudocode}, we treat $lost$ counters as real-valued and substitute line~\ref{bf:increment} with two commands incrementing $lost[u]$ by $p_{v, u}$ and $lost[v]$ by $p_{u, v}$. Once operated these slight modifications we are ready to prove the following theorem (analogous of Theorem~\ref{theorem:algorithm_guarantees}).

\begin{theorem}
Let $T$ be a probabilistic tournament graph with $n$ vertices and with $\ell$ the expected number of matches lost by the champion. The modified version of Algorithm~\ref{bf:alg:pseudocode} described above finds every champion by requiring $\Theta(\ell n)$ arc lookups and time. The algorithm requires linear space.
\end{theorem}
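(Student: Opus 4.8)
The plan is to follow the same three-part skeleton used for Theorem~\ref{theorem:algorithm_guarantees}: I would (i) argue that the champion is never eliminated when $\alpha > \ell$, (ii) argue that the elimination tournament always terminates, and (iii) bound the total cost by a geometric sum over the exponential-search values of $\alpha$. The only genuinely new features are that the $lost$ counters are now real-valued and that every match increments \emph{both} endpoints (by $p_{v,u}$ and $p_{u,v}$) rather than a single loser, with the elimination threshold now tested against both endpoints after each match. I will therefore concentrate on checking that the bookkeeping underlying the two arguments survives these changes.

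For \emph{non-elimination}, I would fix an iteration with $\alpha > \ell$ and observe that $lost[c]$ only ever accumulates terms $p_{v,c}$ from matches actually played by the champion $c$, so that $lost[c] \le \sum_{v \in V} p_{v,c} = \ell < \alpha$ at all times; hence $c$ never reaches the threshold and survives in $A$. Exactly as in the binary case, \textsc{FindChampionBruteForce} recomputes the true expected loss $\sum_{v \in V} p_{v,u}$ of every survivor against the whole of $T$, so it returns a global champion with $lost_c = \ell < \alpha$ and the exit test fires. Conversely, when $\alpha \le \ell$ every vertex of $V$, champion included, has true expected loss at least $\ell \ge \alpha$, so the reported candidate satisfies $lost_c \ge \alpha$ and the search correctly advances; this reproduces the exponential-search correctness verbatim.

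For \emph{termination}, I would reuse Lemma~\ref{lemma:vertex_exists}, noting that its proof generalizes without change: summing expected losses over all vertices of any $m$-vertex probabilistic sub-tournament gives $\sum_{u}\sum_{v} p_{v,u} = \sum_{\{u,v\}} (p_{v,u} + p_{u,v}) = \binom{m}{2}$, precisely because the two complementary probabilities on each edge sum to one. Thus some alive vertex carries expected loss at least $(m-1)/2 \ge \alpha$ whenever $m > 2\alpha$, so while $|A| > 2\alpha$ the algorithm cannot exhaust the internal pairs without driving some counter to $\alpha$ and eliminating a vertex; hence $|A|$ strictly decreases until $|A| = 2\alpha$.

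For \emph{complexity}, each arc lookup adds $p_{u,v} + p_{v,u} = 1$ to the mass $\sum_u lost[u]$, so the number of matches in one iteration equals that mass; a survivor has $lost < \alpha$ and an eliminated vertex has $lost < \alpha + 1$ (it was below $\alpha$ before its final, at-most-unit, increment), giving mass $< n(\alpha+1) = O(\alpha n)$, to which \textsc{FindChampionBruteForce} adds $O(\alpha n)$ lookups for the $2\alpha$ survivors. Summing $O(\alpha n)$ over $\alpha = 1,2,4,\dots$ up to $O(\ell)$ yields $O(\ell n)$ lookups, and the array-based implementation of Section~\ref{sec:implementation_details} (whose swaps and cursor moves are indifferent to real-valued counters) gives the matching $O(\ell n)$ time in $O(n)$ space. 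I expect the main obstacle to be the termination step: one must confirm that the double-increment rule still yields the exact mass identity $\binom{m}{2}$ that powers the averaging argument of Lemma~\ref{lemma:vertex_exists}, and that fractional increments perturb the counter bounds only by an additive constant absorbed into the $O(\cdot)$.
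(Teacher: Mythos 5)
Your proposal is correct and follows essentially the same route as the paper: it reformulates Lemma~\ref{lemma:vertex_exists} via the mass identity $\sum_{u}\sum_{v}p_{v,u}=\binom{m}{2}$ (which holds because complementary probabilities on each pair sum to one), and bounds the cost by noting that each lookup adds exactly one unit to $\sum_u lost[u]$ while each counter stays below $\alpha+1$, giving $O((\alpha+1)n)$ lookups per iteration and $O(\ell n)$ overall. The paper's own proof is exactly this adaptation of the binary-case argument, so no further comparison is needed.
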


\subsubsection*{Correctness}
The correctness proof is almost identical to the one we have detailed in Section~\ref{bf:sec:algorithm}. We are not repeating the whole proof, in fact it is sufficient to substitute occurrences of ``losses'' with ``expected losses'' and reformulate the Lemma~\ref{lemma:vertex_exists} as follows to obtain the desired proof.

\begin{lemma}
In any probabilistic tournament $T$ of $n$ vertices there is at least one vertex $u$ such that $\sum_{v \in V} p_{v, u} \geq (n-1)/2$. In other words, there exists a player whose expected number of matches lost is at least $(n-1)/2$.
\end{lemma}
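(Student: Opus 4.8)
The plan is to mirror the averaging argument used for Lemma~\ref{lemma:vertex_exists}, replacing the integer in-degrees by their expected-value counterparts. The single ingredient that makes the argument go through in the probabilistic setting is the complementarity constraint $p_{u,v} + p_{v,u} = 1$, which plays exactly the role that the deterministic identity ``every match produces exactly one loss'' played in the original proof.

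First I would write the expected number of matches lost by a fixed vertex $u$ as $\sum_{v \neq u} p_{v,u}$, interpreting the sum in the statement as ranging over $v \neq u$ (equivalently, adopting the convention $p_{u,u} = 0$, since a vertex does not play against itself). Then I would sum this quantity over all vertices $u \in V$ and regroup the resulting double sum by unordered pairs. For each unordered pair $\{u,v\}$ with $u \neq v$, exactly the two terms $p_{v,u}$ and $p_{u,v}$ appear, and by complementarity their joint contribution is $p_{v,u} + p_{u,v} = 1$. Since there are $\binom{n}{2}$ unordered pairs, the total expected losses summed over all vertices equals $\binom{n}{2} = \frac{n(n-1)}{2}$, precisely the quantity that appeared in the deterministic case.

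Finally, a pigeonhole/averaging step closes the argument: distributing this total mass of $\frac{n(n-1)}{2}$ over the $n$ vertices, at least one vertex $u$ must satisfy $\sum_{v \neq u} p_{v,u} \geq \frac{1}{n}\cdot\frac{n(n-1)}{2} = \frac{n-1}{2}$, which is exactly the claimed bound.

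I do not expect a genuine obstacle here; once complementarity is invoked the result is a one-line averaging bound. The only point deserving a moment of care is the treatment of the diagonal: since a player never competes against itself, one must either exclude the term $v = u$ from the sum or fix the convention $p_{u,u} = 0$, so that the regrouping by unordered pairs is clean and each pair contributes exactly $1$, with no stray self-term and no double counting.
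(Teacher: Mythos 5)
Your proof is correct and follows the same averaging argument as the paper: the total expected losses over all vertices equal $\binom{n}{2}$ (via complementarity $p_{u,v}+p_{v,u}=1$ per unordered pair), so some vertex has expected losses at least $(n-1)/2$. You simply spell out the pair-regrouping step and the diagonal convention that the paper leaves implicit.
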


\begin{proof}
The sum of the ``expected losses'' of all vertices of $T$ is exactly $\binom{n}{2} = \frac{n(n-1)}{2}$. Since there are $n$ vertices, there must be at least one vertex losing $\frac{n-1}{2}$ matches, on average.
\end{proof}

\subsubsection*{Complexity}
The complexity analysis is again akin to the one of Section~\ref{bf:sec:algorithm}, but we dig in a deeper details here. Each unfolded arc increases $\sum_{u \in V} lost[u]$ by one; since $lost[u]$ of any $u \in V$ is incremented until it surpasses $\alpha$ of at most one unit at a time, then $lost[u]$ cannot be greater than $\alpha + 1$ and $\sum_{u \in V} lost[u] < (\alpha + 1)n$.
Therefore no more than $(\alpha + 1 ) n$ arcs are unfolded during the elimination step of a single iteration of the exponential search.
Moreover, as in Section~\ref{bf:sec:algorithm}, \textsc{FindChampionBruteForce} procedure takes less than $2 n \alpha$ arc lookups. Thus an iteration of the exponential search takes less than $3 n (\alpha + 1)$ arc lookups, and summing up all these arc lookups we get the desired $O(\ell n)$ complexity.

\subsection{Parallel (Batched) Version}
\label{bf:sec:batched}
In modern architectures, e.g., GPUs, it is possible to perform multiple arc lookup operations in parallel.
A natural question is whether we are able to take full advantage of this parallelism to cut down the complexity of Algorithm~\ref{bf:alg:pseudocode}.
In this subsection, we propose Algorithm~\ref{batched:pseudocode} under the assumption to be able to unfold a batch of $B$ arcs in parallel.

In particular, Algorithm~\ref{batched:pseudocode} processes $O\l(\frac{\ell n}{B} + \ell \log B\r)$ batches, so the overhead is asymptotically negligible if $B = O\l(n / \log n\r)$, which is a condition that often holds in practice.

Algorithm~\ref{batched:pseudocode} is a slight modification of Algorithm~\ref{bf:alg:pseudocode}.
As the previous algorithm, it performs an exponential search of $\ell$ repeatedly doubling the parameter $\alpha$.
For each $\alpha$ it assumes that the champion belongs to the set of \textit{alive vertices} $A$ and performs an elimination tournament among the vertices of $A$ eliminating any player that loses $\alpha$ matches.
The elimination step is now performed in batches (line~\ref{batched:unfold}) and terminates when the alive players are few enough (line~\ref{batched:few_enough}).
The method \textsc{FindChampionBruteForce\textsubscript{Par}} (line~\ref{batched:brute_force}) can
be parallelized with no efforts by unfolding all $O\l(6 \alpha n\r)$ arcs in batches of $B$ arcs at a time, hence we focus on the elimination step.
The main difference with respect to Algorithm~\ref{bf:alg:pseudocode} resides in the procedure \textsc{BuildBatch}, which decides what are the $B$ arcs to lookup in parallel.
It creates local copies $A_{loc}$ and $lost_{loc}$ of the set $A$ and the vector $lost$, then the procedure selects matches in $A_{loc} \times A_{loc}$ that have not been played yet and, for each of them, assigns a loss to both opponents.
Now suppose that the batched games were played sequentially (namely, played at line~\ref{batched:push_back}) and $lost$ and $A$ were updated accordingly: we would have that $lost_{loc}$ provides an upper estimate of $lost$ and $A_{loc} \subseteq A$.
Therefore, it is guaranteed that every insertion in a batch will produce a match loss for a player that would be still alive in case we unfolded the batch sequentially.
This is a point worth stressing since it guarantees that $lost[u] \leq \alpha$ for each $u \in V$\footnote{We will employ this property in the complexity analysis.}.
Finally, even though it is not guaranteed that \textsc{BuildBatch} produces a $B$-sized batch, for that to happen it is sufficient that $A$ has at least $2B + 2\alpha$ elements.
This can be enforced halving the batch size every time this condition does not hold (line~\ref{batched:halving}) and we will see that this will not spoil the complexity of Algorithm
\ref{batched:pseudocode}.
Intuitively, the elimination step consists of two different epochs: the first one unfolds arcs in $B$-sized batches (where $B$ is the original batch size) until $|A| \geq 2B + 2\alpha$; the second one processes smaller and smaller batches until $|A|$ is small enough (line~\ref{batched:few_enough}).

\begin{algorithm}
	\caption{}\label{batched:pseudocode}
	\begin{algorithmic}[1]
\Procedure{FindChampion\textsubscript{Parallel}}{$T = \left(V, E\right)$, $B$}
	\For {($ \alpha = 1 $; true; $\alpha = 2 \alpha$)} \label{pseudocode_outer}
		\State {$ A = V $}
		\State {$ S = \left\{(u,u) \mid u \in V\right\} $}
		\State {$ \forall u \in V \;\; lost[u] = 0$} 
		\State {$ B' = B $}
		\While {$ |A| > 6 \alpha$} \label{batched:few_enough}
			\While {$|A| < 2B' + 2\alpha$} \label{batched:halving}
				\State {$B' = B' / 2$}
			\EndWhile
			\State {$batch =$ \Call{BuildBatch}{$A$, $S$, $B'$, $lost$, $\alpha$}} \label{buildBatch}
			\State {\Call{UnfoldInParallel}{$batch$}} \label{batched:unfold} 
			\For {$(u, v)$ in $batch$}
				\State {$loser =$} \algorithmicif\ {$(u,v) \in E$} \algorithmicthen\ {$v$} \algorithmicelse\ $u$
				\State {\Call{IncreaseLoss}{$A$, $lost$, $\alpha$, $loser$}}
			\EndFor \label{batched:endfor}
		\EndWhile
		\State {$c, lost_c =$\Call{FindChampionBruteForce\textsubscript{Par}}{$A$,$E$,$B$}}\label{batched:brute_force}
		\If {$lost_c < \alpha $} \label{pseudocode_exitcondition}
			\Return {$c$}
		\EndIf
	\EndFor
\EndProcedure
\State
\Procedure{BuildBatch}{$A$, $S$, $B^\prime$, $lost$, $\alpha$}
	\State {$batch = \varnothing$} \State {$A_{loc} = A$} \label{local_copy1} 
	\State {$lost_{loc} =lost$} \label{local_copy2} 
	\While {$|batch| < B^\prime$}
		\State {Choose $(u, v) \in A_{loc}^2 \setminus S$}
		\State {$S = S \cup \l\{(u, v), (v, u)\r\}$} 
		\State {$batch = batch \cup \l\{(u, v)\r\}$} \label{batched:push_back}
		\State {\Call{IncreaseLoss}{$A_{loc}$, $lost_{loc}$, $\alpha$, $u$}}
		\State {\Call{IncreaseLoss}{$A_{loc}$, $lost_{loc}$, $\alpha$, $v$}}
	\EndWhile
	\State {\textbf{return} $batch$}
\EndProcedure
\State{}
\Procedure {IncreaseLoss}{$A$, $lost$, $\alpha$, $v$}
	\State {$\mathrel{++}\!lost[v]$} 
	\If {$ lost[v] \geq \alpha $}
		\State {$ A = A \setminus \left\{v\right\}$} 
	\EndIf
\EndProcedure
	\end{algorithmic}
\end{algorithm}

\subsubsection*{Correctness}
The correctness can be proven exactly in the same way as the sequential case, the only detail to take care about is that the function \textsc{BuildBatch} terminates by producing a $B$-sized batch.
It is sufficient to notice that as long as $|A_{loc}| > 2\alpha$ there is an arc to unfold in $A_{loc}^2 \setminus S$ (using the same argument of the sequential case), and that since we call \textsc{IncreaseLoss} at most $2B'$ times at each iteration, then $|A| \geq 2B' + 2\alpha$ (line~\ref{batched:halving}) is sufficient to ensure the termination.

\subsubsection*{Complexity}
\begin{theorem}
Given a tournament graph $T$ with $n$ vertices and with $\ell$ 	matches lost by the champion, Algorithm~\ref{batched:pseudocode} finds every champion by requiring $O\l(\frac{\ell n}{B} + \ell \log B\r)$ calls to \textsc{UnfoldInParallel} and $O(\ell n)$ time and space.
\end{theorem}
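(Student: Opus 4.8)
The plan is to bound the three quantities (calls to \textsc{UnfoldInParallel}, time, and space) separately, in each case summing over the $O(\log \ell)$ iterations of the outer exponential search, one per value $\alpha \in \{1,2,4,\ldots,2\ell\}$. The workhorse is the invariant already isolated in the correctness discussion, namely that $lost[u] \le \alpha$ for every $u$ at all times. Since each unfolded arc of the elimination step increments exactly one real loss counter, this invariant caps the total number of elimination matches within a single $\alpha$-iteration by $\alpha n$, exactly as in the sequential analysis of Section~\ref{bf:sec:algorithm}. I would then treat the two epochs of the elimination step and the final brute-force step as three separate contributions to the batch count.

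First I would dispose of the two cheap contributions. In the first epoch the batch size is exactly $B$ and every batch is full (guaranteed by $|A| \ge 2B + 2\alpha$), so the number of first-epoch batches is at most $(\alpha n)/B$. The procedure \textsc{FindChampionBruteForce\textsubscript{Par}} unfolds all arcs incident to the $\le 6\alpha$ surviving vertices, i.e.\ $O(\alpha n)$ arcs, hence $O(\alpha n / B)$ further batches. Summing either contribution over $\alpha$ collapses a geometric series (since $\sum_\alpha \alpha = O(\ell)$) and yields $O(\ell n / B)$ calls, the first term of the claimed bound.

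The delicate part, and the source of the additive $\ell \log B$ term, is the second epoch. Here the batch size $B'$ only ever decreases, and it is halved precisely when $|A| < 2B' + 2\alpha$. The key observation I would make is that the elimination loop runs only while $|A| > 6\alpha$, while a halving step guarantees $|A| < 4B' + 2\alpha$; combining these two inequalities yields $B' > \alpha$ throughout the second epoch. Consequently there are only $O(\log(B/\alpha)) = O(\log B)$ distinct batch sizes. For a fixed size $B'$, the set $A$ shrinks from below $4B' + 2\alpha$ down to $2B' + 2\alpha$, so at most $4B' + 2\alpha$ distinct vertices are ever alive and each absorbs at most $\alpha$ fresh losses; hence the matches played at size $B'$ number at most $\alpha(4B' + 2\alpha)$, which is $O(\alpha)$ full batches once we invoke $B' > \alpha$. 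Multiplying by the $O(\log B)$ sizes gives $O(\alpha \log B)$ second-epoch batches per iteration, and the geometric sum over $\alpha$ produces $O(\ell \log B)$. Adding the three contributions yields the stated $O(\ell n / B + \ell \log B)$ bound on calls to \textsc{UnfoldInParallel}.

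Finally I would verify the time and space. The total number of individual arc lookups over all iterations is $\sum_\alpha O(\alpha n) = O(\ell n)$, as above. The only non-obvious overhead is the local copying inside \textsc{BuildBatch}: implemented naively the copies of $A_{loc}$ and $lost_{loc}$ cost $O(n)$ each, which is too much. I would instead record only the $O(B')$ entries actually touched while assembling a batch in a change-list and discard them afterwards, so each \textsc{BuildBatch} costs $O(B')$; summed over all batches this is $O(\ell n)$, matching the arc-lookup cost, while the arrays plus one batch occupy $O(n)$ space. The main obstacle is exactly the second-epoch count: a naive argument bounds the batches at size $B'$ only by $O(\alpha^2/B')$, which explodes for small $B'$, and the point that rescues the proof is that the termination threshold $6\alpha$ forces $B' > \alpha$, so the dangerous small-batch regime is never entered.
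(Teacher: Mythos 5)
Your proof follows essentially the same route as the paper's: the same decomposition into a full-batch epoch, a halving epoch, and the brute-force phase; the same invariants $lost[u]\le\alpha$ and $2B'+2\alpha\le|A|\le 4B'+2\alpha$; and the same key observation that the loop guard $|A|>6\alpha$ forces $B'>\alpha$, so each of the $O(\log B)$ batch sizes contributes only $O(\alpha)$ batches. The one place where you are more optimistic than the paper is the space bound: choosing a pair in $A_{loc}^2\setminus S$ requires remembering which arcs have already been played (the two-cursor trick of the sequential implementation breaks because \textsc{BuildBatch} temporarily skips vertices that may later be revived), which the paper resolves with per-node hash tables in $O(\ell n)$ space --- still within the theorem's stated bound, so nothing essential is lost.
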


\begin{proof}
Consider the $i$-th iteration of the cycle at line~\ref{batched:few_enough} and denote with $A_i$ the number of alive vertices $|A|$ and with $B_i$ the value of $B'$, evaluated immediately before calling the \textsc{BuildBatch} function at line~\ref{buildBatch}.
In particular we have $A_1 = |V|$ and $B_1$ = $B$. First, we prove the following lemmas.

\begin{lemma}
For each $i \geq 0$, $A_{i} - B_{i} \leq A_{i + 1} \leq A_i$ holds.
\end{lemma}

\begin{proof}
The first inequality holds since no more than $B_i$ games are played during the $i$-th iteration and thus no more than $B_i$ players are removed from the alive set. The second inequality holds since the set $A$ of alive vertices decreases over time.
\end{proof}

\begin{lemma}
\label{lemma_inequality}
Let $j$ be the first iteration in which the conditional statement at line~\ref{batched:halving} is true, that is $j = \min \l\{i \,|\, A_i < 2B + 2\alpha\r\}$. For each $i \geq j$, $2B_i + 2\alpha \leq A_i \leq 4 B_i + 2\alpha$ holds.
\end{lemma}

\begin{proof}
We prove it by induction.

\emph{Base Case, $i=1$}: it is sufficient to notice that $B_{j - 1} = 2 B_j$ and $A_j < 2B_{j - 1} + 2\alpha \leq A_{j - 1}$ hold thanks to the definition of $j$, and combine those equations with Lemma~\ref{lemma_inequality}.

\emph{Inductive Case, $i>1$}: during the $i$-th iteration we have two cases depending on whether we update the value of $B'$ or not.
If we do not update $B'$, that is $A_i \geq 2B_{i -1} + 2\alpha$, then we have $B_i = B_{i -1}$ and $A_i \leq A_{i -1} \leq 4B_{i -1} + 2\alpha$ by inductive hypotheses.
Otherwise, if we update $B'$, we have $A_i < 2B_{i - 1} + 2\alpha$ and $B_{i - 1} = 2 B_i$, then $A_i < 4B_i + 2\alpha$ and $A_i \geq A_{i - 1} - B_{i - 1} \geq B_{i - 1} + 2\alpha = 2B_i + 2\alpha$.
\end{proof}

We now fix $\alpha$ and upper-bound the number of arcs unfolded for each batch size $B_i$. First we deal with the case $B_i = B$ in which we employ the original batch size; in that case, we can safely upper-bound the number of arcs with $\alpha n$ since every $lost$ counter is never greater than $\alpha$ and every arc unfolded  increases a $lost$ counter by one.
Then, consider the case $B_i = B / 2^k$, for a specific value of $k$; we have that $|A_i| \leq 4 B / 2^k + 2\alpha$ and, by applying the same argument as above on the elements of $A_i$, that at most $\alpha \l(4 B / 2^k + 2\alpha\r)$ arcs are unfolded using a batch of size $B / 2^k$.
Thanks to the clauses at lines~\ref{batched:few_enough} and~\ref{batched:halving}, we have $6 \alpha \leq A_i \leq 4B_i + 2\alpha$, which implies $B_i \geq \alpha$.
To compute the total number of calls to \textsc{UnfoldInParallel}, it is sufficient to divide the maximum number of arc lookups ($\alpha |A_i|$) by the appropriate batch size ($B_i$) and sum them up

\begin{equation*}
\begin{split}
	\sum_i \frac{\alpha |A_i|}{B_i}
	& \leq \frac{\alpha n}{B} + \sum_{i=1}^{\lceil \log_2\l(B / \alpha\r) \rceil} \frac{\alpha |A_i|}{B_i}
	\\& \leq \frac{\alpha n}{B} + \sum_{i=1}^{\lceil \log_2\l(B / \alpha\r) \rceil} \frac{\alpha \l(4 (B / 2^i) + 2\alpha\r)}{B / 2^i}
	\\& < \frac{\alpha n}{B} + 4 \alpha \log_2 B + \frac{2 \alpha^2}{B} \sum_{i=1}^{\lceil \log_2\l(B / \alpha\r) \rceil} 2^i
	\\& < \frac{\alpha n}{B} + 4 \alpha \log_2 B + 2 \alpha
\end{split}
\end{equation*}

\noindent where the first addendum refers to the batches processed unfolding $B$ arcs at a time, while the other addenda refers to the case of smaller batch sizes.
Finally, to get the number of parallel unfoldings during the entire execution, it suffices to sum the quantity above for $\alpha = 1, 2, \dots, 2^{\lceil \log \ell \rceil}$ and we get the desired $O\l(\frac{\ell n}{B} + \ell \log B\r)$.

Now it remains to prove that Algorithm~\ref{batched:pseudocode} uses $O(\ell n)$ operations and space.
The proof is the same as for Algorithm~\ref{bf:alg:pseudocode}, we mainly need to pay attention to lines~\ref{local_copy1} and~\ref{local_copy2} since creating local copies would increase the complexity.
Fortunately, it is sufficient to use the global versions of $A$ and $lost$, store in a list the changes performed on them, then restore their state before terminating \textsc{BuildBatch}.
In fact, we adopted local copies only to make the pseudocode clearer.
Moreover, since the \textsc{BuildBatch} can temporarily skip some vertices (according to the local copy of $lost$) that may be re-included later after the parallel unfold, we cannot employ the linear-space selection described in Section~\ref{sec:implementation_details}.
In this case, we further need to associate to each node $u$ the set (hash table) of all arcs $(u, \cdot) \in E$ unfolded by the algorithm, so to skip the ones already unfolded, in constant time.
The solution proposed in Section~\ref{sec:implementation_details}, which employs the cursors $p_1$ and $p_2$ to decide the arcs to unfold, properly extended with this check, guarantees $O(\ell n)$ time and space.
\end{proof}

\subsubsection*{Implementation Details}
Algorithm~\ref{batched:pseudocode} could not use all the comparisons that are available in a single batch, because of the batch size halving (row~\ref{batched:halving}) or because the brute force call
(row~\ref{batched:brute_force}) involves a number of arcs that is not divisible for the batch size.
For this reason, we employ a simple heuristic to exploit each batch the most, which applies when employing the hash table to store the results of the arc lookups (Section~\ref{sec:implementation_details}).
In detail, we add new arcs to the batch, deterministically, each time Algorithm~\ref{batched:pseudocode} asks to unfold a partially filled batch.
We use an heap data structure to get the node with the smallest number of comparisons lost that still has unfolded arcs, then we add to the batch the remaining unfolded arcs, in the order they appear, until the batch becomes full.
If all node's arcs are added and the batch is still non-full, then the previous operation is repeated until either the batch becomes full or all arcs have been unfolded.


\section{Experiments}
\label{bf:sec:experiments}
In this section, we present a comprehensive experimental assessment of the proposed algorithms on a Question Answering task.
In detail, we focus on passage ranking that aims at selecting, given a question, the most relevant among a set of textual passages answering the question.
To this end, we employ an existing state-of-the-art pairwise model that works by comparing two results at a time and by selecting the winners of the induced round-robin tournament.
In this scenario, the proposed algorithms aim to find the tournament champions by reducing the number of pairwise comparisons, i.e., arc lookup, performed using the ML model.
In the following, we first describe the experimental setting, then we evaluate the proposed algorithms in terms of number of comparisons and speedup of the ranking process.

\subsubsection*{Dataset}
For the the assessment we employ the Microsoft MAchine Reading COmprehension dataset (MS\,MARCO)~\cite{DBLP:conf/nips/NguyenRSGTMD16}.
It is a large scale dataset for Question Answering and consists of approximately $1$ million anonymized questions sampled from the Bing search query logs and about $9$ million passages extracted from web pages.
The development set used for the assessment contains $6,980$ queries having one relevant passage each, on average.

\subsubsection*{Pairwise Model}
Nogueira~\etal{} recently tackled the task of ranking passages by using a three-stage ranking architecture~\cite{DBLP:journals/corr/abs-1910-14424}.
The duoBERT models recently scored among the top-$10$ solutions of the MS\,MARCO Passage Ranking Leaderboard\footnote{\url{https://microsoft.github.io/msmarco/}} and as the first solution whose public code is publicly available\footnote{\url{https://github.com/castorini/duobert}}.
The first stage selects the top-$1000$ results using the fast BM25 algorithm.
The second stage re-ranks these results using a monoBERT neural model~\cite{DBLP:journals/corr/abs-1901-04085}, which ingests the text of a document at a time to classify it as relevant or not.
Lastly, the third stage re-ranks the top-$30$ results of the previous stage by using a duoBERT pairwise model~\cite{DBLP:journals/corr/abs-1910-14424} that classifies all pairs of document's texts to induce a round-robin tournament among the results.
In particular, the two most promising configurations presented in Nogueira~\etal{}~\cite{DBLP:journals/corr/abs-1901-04085} have been tested: duoBERT\textsubscript{PROBABILISTIC} and duoBERT\textsubscript{BINARY}.
The former works by assigning to each document the sum of the probabilities of all comparisons, while the latter rounds these probabilities in $\{0,1\}$ before summing them.

\subsubsection*{Experimental Methodology}
In our experiments, we replicate the full multi-stage pipeline proposed by Nogueira~\etal{} and we apply the proposed algorithms in the last stage of the pipeline, i.e., top-$30$ re-ranking.
In particular, given a query and the set of its top-$30$ passages, each algorithm drives the identification of the champions deciding which pairs of passages to compare using the ML model.
The objective is to retrieve the top passages by reducing the number of pairwise inferences, i.e., arc lookups, performed by the duoBERT models.

We assess the proposed algorithms by measuring the number of comparisons and the time spent by the ML model to perform all inferences.
For fairness, even if our contribution does not regard the effectiveness of the model, we also report the Recall@$k$ metric assessing the fraction of relevant documents captured within the top-$k$ results.

\subsubsection*{Testing Details}
The tests were performed on a machine with sixteen Intel Xeon E5-2630 cores clocked at 2.40GHz, 192GiB RAM, equipped with a NVIDIA TITAN Xp GPU.
The GPU has been used to run the monoBERT and duoBERT models.

\subsection{Experimental Results}
We now present the results of our experimental evaluation.
To ease the discussion, we start by discussing the evaluation in the binary setting for the retrieval of the top-$1$ result (Algorithm~\ref{bf:alg:pseudocode} and its possible implementations).
We then present the results of the proposed algorithms on the problem generalizations, i.e., top-$k$ retrieval, probabilistic setting, and parallel setting (Algorithm~\ref{batched:pseudocode}).

\subsubsection{Asymptotically-optimal Deterministic Algorithm}
Section~\ref{sec:implementation_details} discusses some implementation details to take into account when implementing Algorithm~\ref{bf:alg:pseudocode}.
In particular, there are two orthogonal aspects to consider in the implementation that we want to assess: exploitation of the input order and exploitation of the past arc lookups.
The first aspect exploits the order of the input list when deciding the order of the arc lookups.
Since our inputs consists of $30$ passages that have already been sorted by the second ranking stage, we expect to have more relevant passages in the first positions of the input.
Therefore, it could be desirable to start by performing the comparisons among the more relevant passages coming from the second stage.
The second aspect avoids multiple unfolds of a same arc by storing the arc lookups performed during the tournament.
Therefore, we can easily save time using a little extra space.

\begin{table}[t]
\caption{Average number of inferences of different implementations of Algorithm~\ref{bf:alg:pseudocode} when applied to duoBERT to retrieve the top-$1$ result on the MS\,MARCO dataset. Columns identify whether the implementation exploits the input order, while rows identify whether it exploits the past lookups to avoid multiple unfolds of a same arc.\label{bf:table:table_versions}}
\centering

{
\setlength{\extrarowheight}{0.15em}
\setlength{\tabcolsep}{0.8em}
\begin{tabular}{lrrr@{}l}

\toprule

 & Ignore & Exploit \\
 & input order & input order \\

\midrule

Ignore past lookups
& 126.09 & 125.81 \\

Exploit past lookups
& 76.58 & 64.62 \\

\bottomrule

\end{tabular}
}
\end{table}

\begin{table}[t]
 \caption{Efficiency-Effectiveness performance achieved by monoBERT,
   duoBERT, and duoBERT \& Alg.~\ref{bf:alg:pseudocode} when
   retrieving the top-$1$ result on the MS\,MARCO dataset.}
 \label{bf:table:table1}
 \vspace{0.5em} \centering 

{
\setlength{\extrarowheight}{0.15em}
\setlength{\tabcolsep}{0.25em}
\begin{tabular}{lrrr}

\toprule

Method & Recall@1 & Inferences & Time (s) \\

\midrule

BM25 + monoBERT
& 0.251 & 1000 & 65.91 \\

\midrule

+ duoBERT\textsubscript{BINARY}
& 0.269 & 870 & 57.34 \\

+ duoBERT\textsubscript{BINARY} \& Alg. \ref{bf:alg:pseudocode}
& 0.269 & 65 & 4.26 \\




\bottomrule

\end{tabular}
}
\end{table}

\begin{table*}[t]
  \caption{Efficiency-Effectiveness performance achieved by monoBERT, duoBERT, and duoBERT \& Alg~\ref{bf:alg:pseudocode} when retrieving the top-$k$ results on the MS\,MARCO dataset. The number of inferences of monoBERT and duoBERT is independent of the value of $k$.}
  \label{bf:table:table2}
  \centering

{
\setlength{\extrarowheight}{0.15em}
\setlength{\tabcolsep}{0.6em}
\begin{tabular}{llrrrrrr}

\toprule

\multirow{2}{*}{Method}
& \multirow{2}{*}{Metric}
& \multicolumn{6}{c}{$k$}\\

\cmidrule{3-8}

&
& 1 & 2 & 3 & 4 & 5 & 10 \\

\midrule

\multirow{3}{*}{BM25 + monoBERT}
& Recall
& 0.251 &   0.361 &    0.436 &    0.492 &    0.531 &    0.638\\

& \# Inference
& \multicolumn{6}{c}{1,000}\\

& Time (sec.)
& \multicolumn{6}{c}{65.91}\\

\midrule

\multirow{3}{*}{+ duoBERT\textsubscript{BINARY}}
& Recall
&   0.269 &   0.385 &    0.459 &    0.516 &    0.552 &    0.654\\

& \# Inference
& \multicolumn{6}{c}{870}\\

& Time (sec.)
& \multicolumn{6}{c}{57.34}\\

\midrule

\multirow{4}{*}{\makecell[l]{+ duoBERT\textsubscript{BINARY} \\\& Alg. \ref{bf:alg:pseudocode}}}
& Recall
&   0.269 &   0.385 &    0.459 &    0.516 &    0.552 &    0.654\\

& \# Inference
&  65 &  130 &  234 &  266 &  427 &  711\\

& Time (sec.)
&   4.26 &    8.58 &   15.42 &   17.53 &   28.14 &   46.83\\

& Speedup
& \tspeedup{13.5} & \tspeedup{6.7} & \tspeedup{3.7} & \tspeedup{3.2} & \tspeedup{2.0} & \tspeedup{1.2}\\

\midrule

\multirow{3}{*}{+ duoBERT\textsubscript{PROBABILISTIC}}
& Recall
&   0.266 &   0.385 &    0.460 &    0.514 &    0.550 &    0.653\\

& \# Inference
& \multicolumn{6}{c}{870}\\

& Time (sec.)
& \multicolumn{6}{c}{57.34}\\

\midrule

\multirow{4}{*}{\makecell[l]{+ duoBERT\textsubscript{PROBABILISTIC} \\\& Alg. \ref{bf:alg:pseudocode}}}
& Recall
&   0.266 &   0.385 &    0.460 &    0.514 &    0.550 &    0.653\\

& \# Inference
& 134 & 209 & 291 & 355 & 445 & 732\\

& Time (sec.)
&   8.86 &   13.759 &   19.201 &   23.397 &   29.307 &   48.267\\

& Speedup
& \tspeedup{6.5} & \tspeedup{4.2} & \tspeedup{3.0} & \tspeedup{2.5} & \tspeedup{2.0} & \tspeedup{1.2}\\

\bottomrule
\end{tabular}
}

\end{table*}

We now assess the impact of the two orthogonal implementation aspects described above, which lead to four implementations. 
Table~\ref{bf:table:table_versions} reports the average number of inferences of the different implementations of Algorithm~\ref{bf:alg:pseudocode} when applied to duoBERT to retrieve the top-$1$ result on the MS\,MARCO dataset.
As expected, the two aspects contribute to reduce the average number of inferences.
In particular, we notice that the implementation exploiting the input order is more efficient when used together with the hash table, and that their combination nearly halves the number of inferences of the implementation ignoring both aspects.

Table~\ref{bf:table:table1} reports the performance of the best implementation above, i.e., the one exploiting the input order and the past lookups, within the ranking pipeline proposed by Nogueira~\etal{}
We report Recall@$1$, number of inferences and inference time of all ranking stages.
The first row shows the performance of the first two stages of the ranking pipeline, i.e., BM25 + monoBERT, used here to retrieve the top-$30$ results to re-rank.
It retrieves the correct answer for about $25\%$ of the queries but it requires, on average, about $66$ seconds when applied to the top-$1$,$000$ results returned by BM25.
The second row shows the performance of duoBERT\textsubscript{BINARY} when employed as third stage of the ranking pipeline.
As this model does not guarantee symmetric predictions, each comparison needs two inferences, i.e., $u$ versus $v$ and $v$ versus $u$; it thus requires $30 \times 29 = 870$ inferences.
duoBERT\textsubscript{BINARY} improves the quality of the returned list with respect to the previous stage as it retrieves the correct answer for about $27\%$ of the queries.
However, we want to highlight that this third stage almost doubles the running time as it require about $57$ seconds that must be added to the $66$ seconds required by the first two stages, i.e., BM25 + monoBERT.
The third row of Table \ref{bf:table:table1} shows the performance of the third stage when employing Algorithm~\ref{bf:alg:pseudocode}
to decide which pairs of passages to compare using the duoBERT\textsubscript{BINARY} model.
The recall metric is the same as duoBERT\textsubscript{BINARY}.
This result is expected as we proved the algorithm correctness.
On average, this configuration requires about $4$ seconds per query
and it speeds up the ranking process of the third stage of about $13\times$ with respect to the previous configuration.
Moreover, the time cost of the third stage is now negligible with respect to the one of the first two stages.

The average number of inferences required by our approach is about $65$, which is very close to the minimum number of inferences required to solve this problem when the Champion wins all comparisons, i.e., $29 \times 2 = 58$ inferences.
In particular, $95\%$ of the queries are solved with only $50$ comparisons or less, i.e., solved with less than $100$ model inferences.
In addition, we want to highlight that if we apply the algorithm to a symmetric model, we would not need to perform two inferences per comparison, and the algorithm would perform just a few inferences per item. 

\subsubsection{Top-\ensuremath{k} Retrieval and Probabilistic Version}
Table~\ref{bf:table:table2} reports the performance of Algorithm~\ref{bf:alg:pseudocode} in the top-$k$ retrieval task, both in the binary and the probabilistic settings.
As before, we report Recall@$k$, for $k$ in $\{1,2,3,4,5\}$, number of inferences, and inference time of all ranking stages.
The first row shows the performance of the first two stages of the ranking pipeline introduced by Nogueira~\etal{}, i.e., BM25 + monoBERT.
The second and fourth rows show the performance of duoBERT\textsubscript{BINARY} and duoBERT\textsubscript{PROBABILISTIC} when employed as third stage of the ranking pipeline.
The two configurations require the same number of inferences, i.e., $30 \times 29 = 870$, and the same inference time, as the underlying model is the same.
The binary configuration shows a slightly higher recall than the probabilistic one.
Both the versions improve the recall of the previous ranking stage, thus confirming that tournaments are a good modeling of this problem.
The third and fifth rows show the performance of these models when employing Algorithm~\ref{bf:alg:pseudocode} to perform the tournament among the top-$30$ results of each query.
In both cases the recall is preserved, as the algorithm is correct.
The proposed algorithm speeds up the ranking process from $13\times$ to $2\times$ in the binary setting and from $6\times$ to $2\times$ in the probabilistic setting, for $k$ ranging from $1$ to $5$.
Remark that Algorithm~\ref{bf:alg:pseudocode} obtains excellent results in the top-$1$ retrieval task of both settings.

\begin{table}[t!]
  \caption{Average values of $\ell_k$ when varying $k$ and the tournament type.}
  \label{bf:table:table_values_of_ell}
  \vspace{0.5em} \centering 

{
\setlength{\extrarowheight}{0.15em}
\setlength{\tabcolsep}{0.5em}
\begin{tabular}{lrrrrrr}

\toprule

\multirow{2}{*}{Tournament Type} & \multicolumn{6}{c}{$k$}\\
& 1 & 2 & 3 & 4 & 5 & 10 \\

\midrule

Binary
& 0.05 & 1.09 & 2.13 & 3.15 & 4.18 & 9.19 \\


Probabilistic
& 0.78 & 1.77 & 2.78 & 3.78 & 4.78 & 9.58 \\

\bottomrule

\end{tabular}
}
\end{table}

Taking into account that $\ell_k$, i.e., the number of matches lost by the $k$-th result, drives the time complexity of our algorithm, we report in Table~\ref{bf:table:table_values_of_ell} the different values of $\ell_k$ when varying $k$ and the tournament type, i.e., binary or probabilistic.
The table shows that, on this dataset, $\ell_k$ rapidly increases as $k$ grows and that $\ell_k$ is always higher in the probabilistic setting than in the binary setting.
Indeed, in practice, the number of inferences performed by our algorithm rapidly increases as $k$ grows and that the speedups achieved in the probabilistic setting are always smaller than the ones achieved in the binary setting (Table~\ref{bf:table:table2}).


\subsubsection{Parallel (Batched) Version}
Table~\ref{bf:table:table_parallel} reports the performance of Algorithm~\ref{batched:pseudocode} in the parallel setting where the algorithm can unfold a batch of multiple arcs in parallel.
The table reports the number of inferences and the inference time of all ranking stages, for values of batch size between $2$ and $256$ when retrieving the top-$1$ result on the MS\,MARCO dataset.
The Recall@$1$ metric is not reported as the correctness of the algorithm guarantees that the effectiveness does not change with the batch size.
Indeed, Recall@$1$ is always close to $27\%$ as in the non-parallel setting.
The first row shows the performance of the first two stages of the ranking pipeline, i.e., BM25 + monoBERT, while the second row shows the performance of the third stage, i.e., duoBERT\textsubscript{BINARY}.
The number of batch inferences linearly decreases when increasing the batch size for both configurations, as we can unfold more arcs in parallel per batch.
For instance, with a batch size of $64$, we can perform $64$ inferences at a time and the full round-robin tournament requires only $\lceil 870/64 \rceil = 14$ rounds to perform all inferences.
The third row shows the performance of duoBERT\textsubscript{BINARY} used as third stage when employing Algorithm~\ref{batched:pseudocode} to perform the (batched) tournament among the top-$30$ results of each query.
Our algorithm speeds up the ranking from $13\times$ to $3\times$ for batch size ranging from $2$ to $64$.
As expected, the speedup decreases when increasing the batch size as the number of results involved in the tournament is very limited.
Indeed, the algorithm can accurately unfold only one arc for each alive vertex (Algorithm~\ref{batched:pseudocode}, set $A$); it then fills the batch with a simple heuristic that explores all arcs of just a few promising vertices (as described in the ``Implementation Details'' subsection of Section~\ref{bf:sec:batched}).
Therefore, as the batch size becomes bigger than the number of results, i.e., $30$ in our setting, the choices of the algorithm become less oriented.
Nevertheless, Algorithm~\ref{batched:pseudocode} speeds up the ranking of duoBERT\textsubscript{BINARY} for all the values of batch size tested.

\begin{table*}[t!]
  \caption{Efficiency of parallel (batched) implementations of monoBERT,
    duoBERT, and duoBERT \& Alg~\ref{batched:pseudocode} when
    retrieving the top-1 result on the MS\,MARCO dataset.}
  \label{bf:table:table_parallel}
  \vspace{0.5em} \centering 

{
\setlength{\extrarowheight}{0.15em}
\setlength{\tabcolsep}{0.8em}
\begin{tabular}{lrrrrrrrrr}

\toprule

\multirow{2}{*}{Method}
& \multirow{2}{*}{Metric}
& \multicolumn{8}{c}{Batch Size}\\

&&     2   &     4   &     8   &    16  &    32  &    64  &     128 &     256 \\

\midrule

\multirow{2}{*}{BM25 + monoBERT}
& Inferences &  500 &  250 &  125 &  63 &  32 &  16 &   8 &    4 \\
& Time (s) &   32.95 &   16.48 &    8.24 &   4.15 &   2.11 &   1.05 &   0.53 &    0.26 \\

\midrule

\multirow{2}{*}{+ duoBERT\textsubscript{BINARY}}
& Inferences &  435 &  218 &  109 &  55 &  28 &  14 &   7 &    4 \\
& Time (s) &   28.67 &   14.37 &    7.18 &   3.62 &   1.85 &   0.92 &   0.46 &    0.26 \\






\midrule


\multirow{3}{*}{\makecell[l]{+ duoBERT\textsubscript{BINARY} \\\& Alg. \ref{batched:pseudocode}}}
& Inferences &   33 &   23 &   14 &   8 &   5 &   4 &    4 &    4 \\

& Time (s)
&    2.14 &    1.54 &    0.93 &   0.55 &   0.31 &   0.28 &    0.26 &    0.25 \\

& Speedup &   \tspeedup{13.4} &    \tspeedup{9.3} &    \tspeedup{7.7} &   \tspeedup{6.6} &   \tspeedup{5.9} &   \tspeedup{3.3} &    \tspeedup{1.7} &    \tspeedup{1.0} \\

\bottomrule

\end{tabular}
}
\end{table*}


\section{Conclusion}
\label{bf:sec:summary}
We addressed the problem of how to efficiently solve the retrieval of the top-$1$ result when employing pairwise machine learning classifiers.
We mapped it to the problem of finding champions in tournament graphs by minimizing the number of arc lookups, i.e., the number of comparison done through the classifier.
We showed that, given the number $\ell$ of matches lost by the champion, $\Omega(\ell n)$ arc lookups are required to find a champion, and generalized this statement for randomized algorithms that are only correct with some constant probability.
Then, we presented an asymptotically optimal deterministic algorithm that solves the problem and matches the lower bound without knowing $\ell$.
We also turned our attention to three natural variants of the original problem, and showed algorithms that solve them.
First, we solved the problem of finding all the top-$k$ players simultaneously.
Second, we considered a probabilistic tournament in which any cell of the adjacency matrix contains a probability, and achieved the same performance in that more general case.
Third, we supposed we were able to probe $B$ adjacency matrix cells in parallel and achieved a linear (and thus asymptotically optimal) speedup.
Finally, we experimentally evaluated the proposed algorithms to speed-up a state-of-the-art solution for ranking on public data.
Results show that we are able to speed up the retrieval of the top-$1$ result of up to $13\times$ in the classic binary setting.
We also evaluated the three variants of the original problem and we showed that our proposals speeds-up the retrieval from $13\times$ to $2\times$ for $k$ ranging from $1$ to $5$ in the binary setting (first variant) and from $6\times$ to $2\times$ for the same range of $k$ in the probabilistic setting (second variant).
In the parallel setting (third variant), our proposal consistently speeds up the retrieval of the top-$1$ result for all the values of batch size tested.

As future work, we intend to investigate three main research directions.
On the theoretical side, it would be interesting to characterize the leading constant in the complexity of finding the Copeland winner to better compare the lower bounds and the proposed algorithms.
On a more applied side, it is worth investigating heuristics to increase the speed up of our algorithms while retaining their theoretical performance.
Lastly, it would be also interesting to investigate the dependency between the number of arc lookups performed by our algorithms and the probability distribution of the graph arcs, so to link the complexity to the data at hand.

\ifCLASSOPTIONcaptionsoff
\newpage
\fi

\bibliographystyle{plain}
\bibliography{biblio}

\begin{IEEEbiography}[{\includegraphics[width=1in,height=1.25in,clip,keepaspectratio]{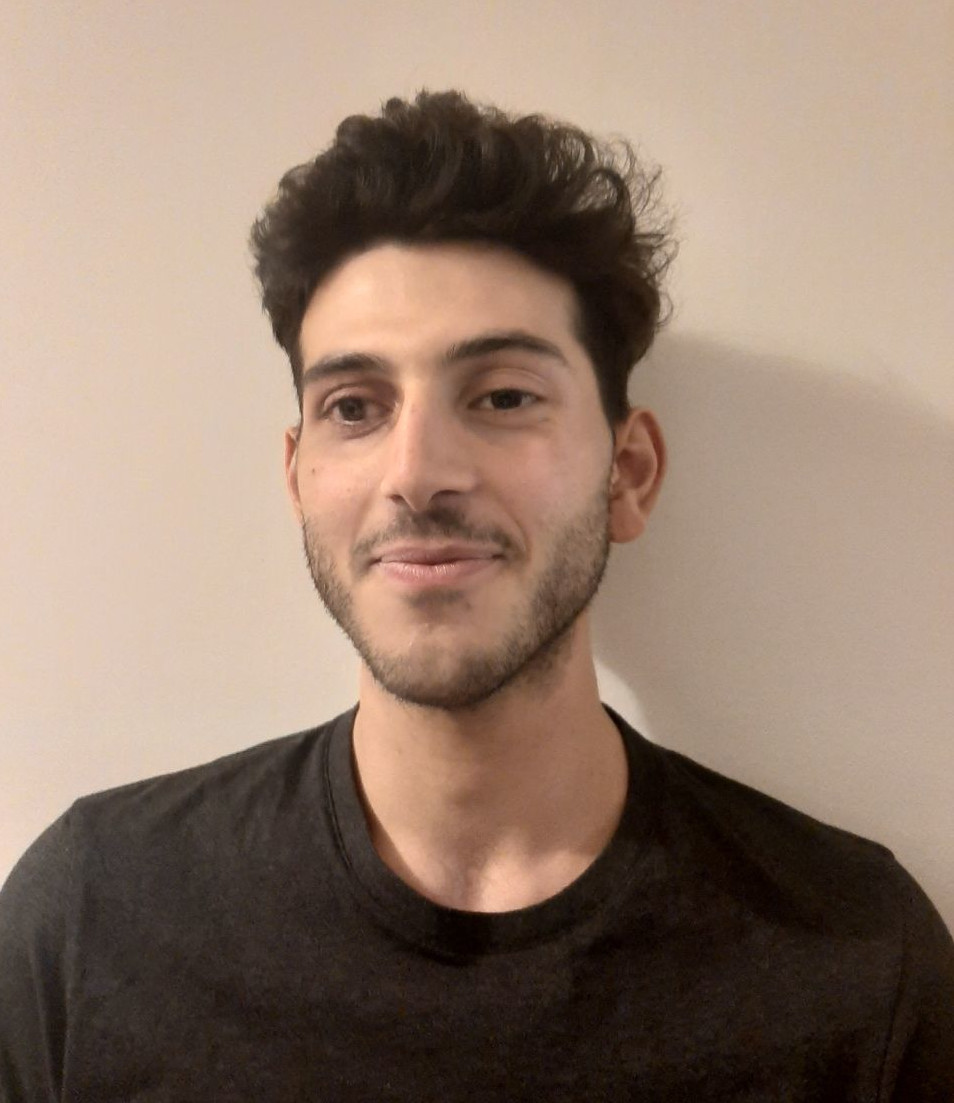}}]{Lorenzo Beretta} received his master degree from University of Pisa and Scuola Normale Superiore in 2020. He is a PhD student at the University of Copenhagen, within the BARC research centre. His research interest is in discrete algorithms: in particular, he is interested in geometrical optimization problems, sublinear algorithms and hashing.
\end{IEEEbiography}

\begin{IEEEbiography}[{\includegraphics[width=1in,height=1.25in,clip,keepaspectratio]{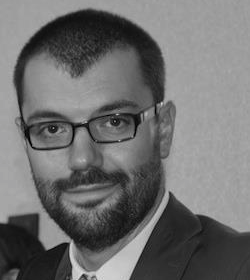}}]{Franco Maria Nardini} received the PhD degree from the University of Pisa in 2011. He is a senior researcher with the National Research Council of Italy. His research interests focus on web information retrieval and machine learning. He authored more than 70 papers in peer-reviewed international journal and conferences. He received the ACM SIGIR 2015 Best Paper Award, ECIR 2014 Best Demo Paper Award, and the ECIR 2022 Industry Impact Award. For more information: \href{http://hpc.isti.cnr.it/~nardini}{http://hpc.isti.cnr.it/~nardini}.
\end{IEEEbiography}

\begin{IEEEbiography}[{\includegraphics[width=1in,height=1.25in,clip,keepaspectratio]{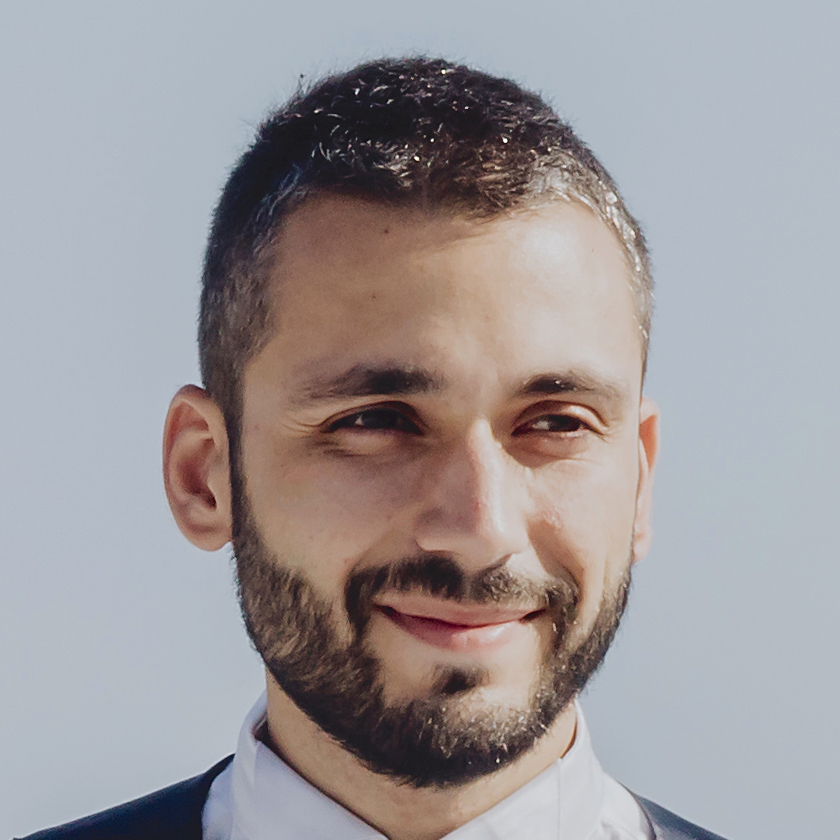}}]{Roberto Trani} received the PhD degree in computer science from the University of Pisa, in 2020. He is a postdoctoral research fellow of the National Research Council of Italy, within the High Performance Computing Laboratory, and his research interests are machine learning, algorithms, information retrieval, and high performance computing.  For more information: \href{http://hpc.isti.cnr.it/roberto-trani}{http://hpc.isti.cnr.it/roberto-trani/}.
\end{IEEEbiography}

\begin{IEEEbiography}[{\includegraphics[width=1in,height=1.25in,clip,keepaspectratio]{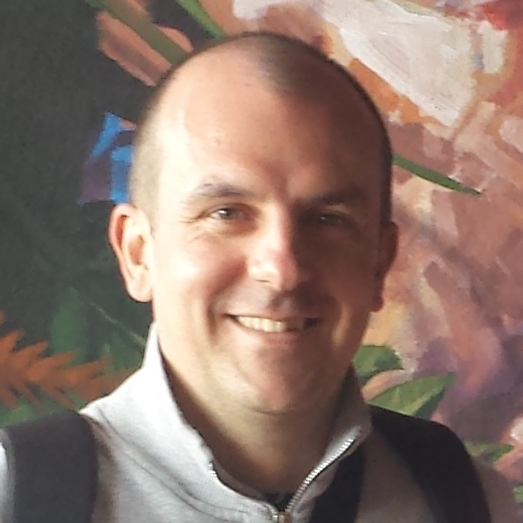}}]{Rossano Venturini} received the PhD degree from the University of Pisa in 2010.  He is an associated professor at the Computer Science Department of the University of Pisa. His research interests are mainly focused on the design and the analysis of algorithms and data structures for indexing and searching large collections.  He received two Best Paper Awards at ACM SIGIR in 2014 and 2015. For more information: \href{http://pages.di.unipi.it/rossano}{http://pages.di.unipi.it/rossano}.
\end{IEEEbiography}
\end{document}